\theoremstyle{plain}
\newtheorem{theorem}{Theorem}
\newtheorem{corollary}{Corollary}
\newtheorem{definition}{Definition}
\newtheorem{lemma}{Lemma}
\newtheorem{proposition}{Proposition}
\theoremstyle{remark}
\newtheoremstyle{example}
{1em}                
{2em}                
{\upshape}        
{}                
{\bfseries}       
{.}               
{ }               
{\thmname{#1}\thmnumber{ #2}\thmnote{ (#3)}}                
\theoremstyle{example}
\newtheorem{example}{Example}
\def\1{\mathds{1}}
\let\epsilon\varepsilon
\let\phi\varphi
\let\widebar\overline
\def\de{\mathop{}\!\mathrm{d}}
\def\E{\mathop{}\mathrm{E}}
\def\R{\ensuremath{\mathbf{R}}}
\def\N{\ensuremath{\mathbf{N}}}
\def\I{\ensuremath{\mathbf{i}}}
\newcommand{\df}[1]{\emph{#1}}
\newcommand{\pp}[1]{\left( #1 \right)}
\newcommand{\pb}[1]{\left[ #1 \right]}
\newcommand{\pc}[1]{\left\{ #1 \right\}}
\newcommand{\abs}[1]{\left\lvert #1 \right\rvert}
\newcommand{\g}{\ifnum\currentgrouptype=16 \;\middle|\;\else\mid\fi}
\DeclareMathOperator{\var}{var}
\DeclareMathOperator*{\argmin}{{arg\,min}}
\DeclareMathOperator*{\argmax}{{arg\,max}}
\newcommand{\Matrix}[1] 
{\begin{pmatrix}
  \Matrix@r #1;\@bye;\Matrix@r
 \end{pmatrix}}
\def\Matrix@r #1;{\@bye #1\Matrix@z\@bye\Matrix@s #1,\@bye, }%
\def\Matrix@s #1,{#1\Matrix@t }%
\def\Matrix@t #1,{\@bye #1\Matrix@y\@bye\@firstofone {&#1}\Matrix@t}%
\def\Matrix@y #1\Matrix@t{\\ \Matrix@r }%
\def\Matrix@z #1\Matrix@r {}
\def\@bye  #1\@bye {}
\def\<#1>{\Matrix{#1}}
\newcommand{\scB}{\mathcal{B}}
\newcommand{\scC}{\mathcal{C}}
\newcommand{\scG}{\mathcal{G}}
\newcommand{\scL}{\mathcal{L}}
\newcommand{\scM}{\mathcal{M}}
\newcommand{\scP}{\mathcal{P}}
\newcommand{\scR}{\mathcal{R}}
\newcommand{\scS}{\mathcal{S}}
\begin{document}

\title{Elicitability\footnote{This version: September 2023. We thank seminar participants at Caltech, the University of Chicago, Cornell, Duke, Harvard/MIT, the University of Iowa, Northwestern, the UK Seminars in Economic Theory, USC, Yale, Zurich, and the audiences of the PARIS Symposium on AI Society and of the Institute for Mathematical and Statistical Innovation.}}
\author{Y. Azrieli\thanks{Dept. of Economics, Ohio State University, \texttt{azrieli.2@osu.edu} and \texttt{healy.52@osu.edu}.} \and C. Chambers\thanks{Dept. of Economics, Georgetown University, \texttt{christopher.chambers@georgetown.edu}.} \and P.J. Healy\footnotemark[1] \and N.S. Lambert\thanks{Dept. of Economics, University of Southern California, \texttt{lambertn@usc.edu}.}}
\date{September 2023}

\maketitle

\begin{abstract}
An analyst is tasked with producing a statistical study.
The analyst is not monitored and is able to manipulate the study. 
He can receive payments contingent on his report and trusted data collected from an independent source, modeled as a statistical experiment.
We describe the information that can be elicited with appropriately shaped incentives, and apply our framework to a variety of common statistical models. 
We then compare experiments based on the information they enable us to elicit. This order is connected to, but different from, the Blackwell order. Data preferred for estimation are also preferred for elicitation, but not conversely. 
Our results shed light on how using data as incentive generator in payment schemes differs from using data for statistical inference.
\end{abstract}

\newpage


\section{Introduction}
\label{sec:introduction}

Imagine a principal who tasks an analyst with producing a statistical study.
The analyst is able to manipulate the study without being detected and, absent additional incentives, has an interest to do so. 
For example, the analyst may work with data that are private or not verifiable, may work with proprietary algorithms, and may have a conflict of interest with the principal.
Suppose the principal has access to independent, trusted data, which may be in too small quantity to replicate the study, but that can be utilized to make contingent payments for the purpose of incentive provision. Our main goal is to understand the extent to which we can incentivize the truthful reporting of information with such data.

To understand the issues, consider the following example. A regulatory agency asks a pharmaceutical company to assess the efficacy of a new drug as part of an evaluation process. This assessment is done by clinical trials to be performed by the firm itself, and could be falsified or fabricated due to a lack of good monitoring practice.\footnote{Although it is hard to detect, there are numerous documented cases of fraud with clinical trials, particularly in cancer medication markets, including fictitious patients, altered laboratory data, fabricated results, or intentional biases in randomization procedures.
See, for example, \citet{george2015data} for a recent account.}
Let the fraction of the population on which the new drug is effective be $\theta$. 
This fraction is what the agency cares to learn, and what the firm has evaluated.
Generally, the firm is not sure about $\theta$. Its knowledge, or belief about $\theta$, takes the form of a probability distribution.
If the agency was eventually able to observe $\theta$ perfectly and in due time, then classical methods could be employed to reward the firm for telling all it knows about $\theta$ and punish every deviation from the truth. It can be done, for example, with the performance score of \citet{matheson1976scoring}.

In the present paper, the relevant `state of nature' is a parameter not directly observed. Instead, the principal collects data that correlate imperfectly with the parameter. 
As a result, the principal may be unable to solicit the full information the analyst possesses about this parameter.
Returning to our example, suppose the agency performs one clinical trial on its own, and by doing so, is able to assess the efficacy of the new drug on just one random individual.
Evidently, this piece of data is insufficient to estimate $\theta$ with reasonable precision. 
It is also not enough to incentivize the firm to reveal all it knows about $\theta$: Payments must depend, at most, on the result of the single trial, and any choice from a menu of such contingent payments does not reveal any information beyond the believed likelihood of a positive trial.
However, it is enough to incentivize the firm to reveal a point estimate of $\theta$. For example, if $x=1$ for a positive trial and $x=0$ for a negative one, and if the firm who reports $\mu$ gets $1 - (\mu - x)^2$, then the firm maximizes expected payoffs exactly when it reports the truth about the assessed mean of $\theta$.

Our first objective is to describe the information for which truthful reports can be incentivized, to an arbitrary magnitude, in a general environment. We call such information \df{elicitable}. The relationship between observations and the parameter of the underlying statistical model is specified by an arbitrary statistical experiment \citep{blackwell1951comparison}. 
Our results show that the elicitable information varies widely as a function of the amount of data.
In our example, one clinical trial enables the agency to elicit the mean of $\theta$, and with two trials, it also elicits the variance. However, the mode and median, that are other typical point estimates, cannot be elicited with any finite number of trials. 
The elicitable information also depends critically on the statistical model.
Suppose that the outcome of a clinical trial is not binary but real valued, that we account for certain covariates such as age and weight, and that we postulate a Gaussian linear model that links covariates to outcomes.
Then, we can incentivize the reporting of the full information by adequately creating payments contingent on a single observation. This ability is lost if we work with a semiparametric linear model that does not specify the shape of the error term. Nonetheless, in this case, with the data contained in four observations, we can still incentivize the reporting of the mean coefficients and their variance-covariance matrix. In addition, the data requirements do not depend on the number of covariates in the model.

Our second objective is to compare datasets---specifically, the experiments that produce them---and investigate the conceptual differences between working with data to estimate model parameters, and working with data to create incentives that elicit information.
The Blackwell order is a classical benchmark, whereby an experiment A is more informative than an experiment B if B is a garbling of A. 
This means that posteriors obtained from the data A supplies are mean-preserving spreads of posteriors obtained from the data B supplies.
The Blackwell order is the relevant mean of comparison when the experiment is carried out by a statistician who estimates model parameters and wants to minimize risks \citep{le1996comparison}.
On the other hand, in an elicitation context, A is preferred to B when the data it produces makes it possible to elicit least as much information than the data issued from B.
Although the elicitation order and the Blackwell order apply to different contexts, our results suggest they are closely related: A preference for A over B in the sense of Blackwell implies the same preference in the sense of elicitation. The converse is false, because the elicitation order makes it possible to do many more comparisons than the Blackwell order: Some type of noise in the data does not affect our ability to offer incentives, while it always impedes learning.
In addition, the power of incentives can be maintained when switching from B to A:
We demonstrate that the elicitation order coincides with an `incentive order' that ranks A above B when the incentives that can be implemented with B can also be implemented with A.
However, to do so, it may be necessary to expand the range of possible payments.
If the range of possible payments is constrained to be nonnegative or bounded within a given interval, then the resulting incentive order is stronger than than the elicitation order but remains weaker than the Blackwell order.

The paper proceeds as follows. 
The remainder of this section discusses the related literature. 
Section~\ref{sec:model} presents the model. 
Section~\ref{sec:elicitable information} provides three main results to describe the information that can be elicited when the principal has access to some given arbitrary experiment, and puts these results to work in several common statistical models.
Section~\ref{sec:comparison of experiments} investigates the comparison of experiments from the viewpoint of information elicitation and contrasts it with the classical Blackwell comparison of experiments.
The appendices include the proofs omitted from the main text.

\subsection*{Related Literature}

To understand the connection with the literature, it is helpful to consider the following framework. 
An agent has a private type $t \in T$. A principal observes a signal $s \in S$ about the agent's type.
The type and signal spaces, $T$ and $S$, are arbitrary, they depend on the context.
The joint distribution over types and signals is common knowledge.
The agent is first asked to report his type, then the signal realizes and the principal makes a transfer to the agent, $\phi(\hat t, s)$, as a function of the report $\hat t$ and the signal $s$. 
In incentive compatible transfer schemes, the agent, assumed to be risk neutral, is best off reporting honestly.
In the abstract, this framework fits many models of the literature.
Our model is also an instance of this framework, whereby the agent is the analyst and his type is the analyst's assessment of the parameter of interest, while the signal represents the data collected by the principal.

The voluminous literature on belief and probability elicitation goes back to \citet{brier1950verification}, \citet{good1952rational}, \citet{mccarthy1956measures}, \citet{definetti1962} and \citet{savage1971elicitation} (for a recent survey, see \citealp{gneiting2007strictly}).
As in our work, the problem being dealt with can be viewed as an instance of the framework above, where the agent's type is a probability distribution over states, and the principal's signal is a state realization drawn according to the agent's type. 
The main focus of this literature is the design of transfer schemes that offer strict incentives to report truthfully.
The fundamental difference with our work is that, in our setup, the parameter of the underlying statistical model, which is the analog of the state in the probability elicitation literature, is not known and cannot be inferred from the principal's signal. For this reason, it is often not possible to find a scheme that offer strict incentives for truthfully reporting the agent's type.
A recent strand of literature simplifies communication by restricting the set of feasible reports.\footnote{See, in particular, \citet{lambert2008eliciting}, \citet{gneiting2011}, \citet{abernethy2012}, \citet{frongillo2015a}.} 
In the framework above, this is saying that the message space used to report the type is smaller than the entire type space. 
In this case, it is not always possible to implement strict incentives for honest reports.
These works examine the message spaces for which transfer schemes with strict incentives continue to exist and study these schemes. 
However, they continue to assume that the full state is observed. The inability to design adequate incentives stems from the impossibility for the agent to express his belief rather than the imperfect knowledge of the principal. Notably, the concept of elicitation in these works differ from ours because we do not constrain the agent's communication.
Other works deal with groups of individuals, as in \citet{miller2005eliciting}. The object of interest is a random signal privately observed by each member. Incentives are provided by exploiting a form of correlation among signals. Once the  correlation structure becomes known, these mechanisms reduce to standard probability elicitation methods.

In the mechanism design paradigm, belief elicitation is central to the problem of surplus extraction.
In a seminal paper, \citet{cremer1988full} provide necessary and sufficient conditions under which an auction can be designed that extracts the full surplus from the bidders when their valuations are correlated in the private value environment.
The auction of \citeauthor{cremer1988full} is composed of a belief elicitation part, where each bidder reports, indirectly through her bid, her belief on the distribution over other bidders' valuations, and an allocation part.
The key feature is the control of the payments in the belief elicitation part, a point explicitly made by \citet{mcafee1992correlated}, who interpret the problem of surplus extraction in general mechanisms as an instance of the framework above.\footnote{For recent developments on the possibility or impossibility of full surplus extraction, see \citet{neeman2004relevance}, \citet{heifetz2006generic}, \citet{barelli2009genericity}, \citet{chen2013genericity}, \citet{rahman2012surplus}, and \citet{lopomo2019detectability}. Notably, \citet{fu2021full} extend the original setting of \citeauthor{cremer1988full} to allow the seller to receive signals on the bidders' valuations and remark that these signals are used to construct contingent payments and not for statistical inference.}
The agent's type is the characteristic of a participant to some mechanism, and the principal's signal can be the characteristic of some other participant assumed to have communicated truthfully.
\citeauthor{mcafee1992correlated} argue that the property needed for full surplus extraction is much stronger than the strict incentive compatibility called for in the probability elicitation literature.%
\footnote{\citeauthor{mcafee1992correlated} argue that the problem of full surplus extraction reduces to showing that any real function of the agent's type matches the ex-interim expected transfer of some incentive compatible transfer scheme. 
It is the reason why \citeauthor{cremer1988full} did not use a standard scheme such as a classical quadratic scoring rule, because these schemes offer very limited control on expected transfers.
When the property identified by \citeauthor{mcafee1992correlated} is satisfied, a mechanism designer can extract the full rent of the participant by setting a participation fee that exactly offsets his gains. This participation fee is set by a transfer scheme whose ex-interim expected transfers are equal to the negative of the participant's profit function in the original mechanism.
}
We differ in our objective and in the richness of our type and signal spaces (this fact is not merely technical but also substantive). Most importantly, the conditions on the joint distribution of $(t,s)$ that are necessary for full surplus extraction are not satisfied in the context of our paper.
Indeed, this condition asks that the family of conditional distributions over signals, given the type, \emph{excludes} all convex combinations. On the contrary, in our work this family \emph{includes} all convex combinations. 
Exporting the conditions of \citeauthor{cremer1988full} or \citeauthor{mcafee1992correlated} to our setup is making the assumption that the principal can always elicit the full information from the analyst---which in most environments mean the principal eventually observes the model parameter---and also rules out the possibility that the analyst may have varying degrees of uncertainty about the parameter---essentially, it demands that the analyst knows the parameter of the underlying statistical model exactly. The latter observation relates to the impossibility of extracting full surplus in auctions whose bidders receive signals of varying informativeness, as demonstrated by \citet{strulovici2017impossibility}.

Finally, another line of research examines the possibility to test the knowledge of forecasters (see, in particular, \citealp{dawid1982,foster1998,olszewski2008,shmaya2008}).
A state---typically, an infinite stream of binary data---is drawn from an unknown data generating process, and a forecaster communicates some data generating process. Here the goal is not to provide incentives but rather to design, prove or disprove the existence of statistical tests that, based on the state realization, distinguish the forecaster who knows the true data generating process from those who don't. 
In most applications we deal with, the statistical model is identified and generates i.i.d. samples, which makes it possible to infer exactly the data generating process from an infinite data stream. \citet{al2010} provide positive results for the case of non-i.i.d. samples.


\section{Model}
\label{sec:model}

The model features an analyst and a principal.
The analyst (e.g., a researcher, an expert, or a firm) performs a statistical study whose content is summarized by the distribution over the parameters of a statistical model. 
We refer to this distribution as the analyst's belief over the model parameters.
The principal (e.g., a manager, a firm, or an agency) would like to elicit information on the analyst's belief. We follow the convention of using the female pronoun for the principal and the male pronoun for the analyst.

The analyst's belief captures his uncertainty about possible parameter values.
The statistical model is fixed exogenously and assumed well-specified.
Throughout the paper, $\Theta$ denotes the set of parameters of interest.
The process by which the analyst arrives at a belief is irrelevant and not part of the model; to fix ideas, the analyst may be a Bayesian statistician who starts from some prior over parameters and forms a posterior belief based on laboratory data.

The principal has access to an independent dataset. 
The data it includes are randomly generated and their distribution is a function of the parameter of the underlying statistical model. This relation is formalized by the concept of statistical experiments, following \citet{blackwell1951comparison} and \citet{le1996comparison}. A \df{statistical experiment} is a pair $(Y,\pi)$; $Y$ is the set of possible outcomes that captures the observables, and $\pi$ is a Markov kernel that associates, to every parameter value $\theta$, a distribution over outcomes, $\pi(\cdot | \theta)$.\footnote{By definition of a Markov kernel, $\theta \mapsto \pi(A | \theta)$ is measurable for every $A$.}
Thus, the outcome randomly generated by the experiment includes all the data observed by the principal.
Parameter and outcome spaces are assumed to be standard Borel spaces.\footnote{A standard Borel space is a separable completely metrizable topological space---i.e., a Polish space---endowed with its Borel $\sigma$-algebra.}
The goal beneath this assumption is to encompass a wide range of statistical models, from finite-dimensional parametric models to fully nonparametric ones.
The intuition behind our results is best described with finite spaces, which trivially satisfy this assumption.  

It is useful to recognize three classes of experiments. An experiment $(Y,\pi)$ is \df{categorical} if its outcome space, $Y$, is finite. It is \df{identified} when the underlying statistical model is identified, which means the parameter can be inferred from the outcome distribution: $\pi(\cdot|\theta) \ne \pi(\cdot|\theta')$ if $\theta \ne \theta'$. 
Intuitively, infinitely repeated draws from an identified experiment make it possible to infer perfectly the true parameter.
Finally, it is \df{complete} if, by analogy to statistical theory, the experiment is so that every probability distribution over outcomes can be induced by some belief over model parameters. Formally, for every distribution over outcomes $\mu \in \Delta(Y)$, there exists a belief $p \in \Delta(\Theta)$ such that for all measurable sets of outcomes $A \subseteq Y$,
\[
\mu(A) = \int_\Theta \pi(A|\theta) p(\de \theta).
\]

The principal offers incentives in the form of contingent payments, whereby the analyst is paid as a function of his report and the data observed by the principal after the analyst sent the report. Alternatively, we can think of the principal designing a performance score that the analyst wants to maximize on average.
Formally, for a given experiment $(Y,\pi)$, an elicitation mechanism, or simply \df{mechanism}, is a mapping $\phi:\scR \times Y \to \R$ that takes as input a report from a set $\scR$, together with an outcome randomly generated by the experiment, and outputs a payoff. The analyst cares about expected payoffs. In a direct mechanism the set of reports is the set of possible beliefs over parameters, $\scR = \Delta(\Theta)$. A direct mechanism is \df{incentive compatible} when
\begin{equation*}
	\E_p[\phi(p,y)] \ge \E_p[\phi(q,y)] \qquad\qquad \forall p,q \in \Delta(\Theta),
\end{equation*}
where $y$ is the outcome randomly generated by the experiment, and the notation $\E_p$ refers to the expectation operator in the case of $p$ being the parameter distribution.
There are, implicitly, two levels of randomness in the expectations above, first, a randomization over parameters according to $p$, and second, a randomization over outcomes according to $\pi$, given the draw of the parameter.
To simplify notation, when there is no ambiguity, the same symbol (e.g., $y$ or $\theta$) is used for a random element and a possible realization.

Incentive compatibility is a weak requirement trivially satisfied.
The literature typically restricts attention to mechanisms that satisfy strict incentive compatibility: Reporting one's true belief must be the only best response. In the present context, this criterion is not achievable in general, but we can still incentivize the truthful reporting of \emph{some} information.
We formalize what we learn from the analyst by means of information partitions \citep{aumann1976agreeing}. An \df{information partition} of distributions, $\scP$, or simply information for short, is a partition of the space $\Delta(\Theta)$. Two parameter distributions that belong to different  members of the partition are said to be \df{distinguishable under $\scP$}. They are indistinguishable otherwise. The concept of elicitability is then defined as follows:

\begin{definition}
An incentive-compatible mechanism $\phi$ \df{elicits information $\scP$} when, for any $p, q$ distinguishable under $\scP$, the strict inequality $\E_p[\phi(p,y)] > \E_p[\phi(q,y)]$ obtains.
\end{definition}
An information partition $\scP$ is \df{elicitable} with a given experiment when a mechanism for this experiment can be designed to elicit $\scP$.
Of course, a strict inequality alone cannot serve as a constraint on the power of incentives, but once established, it becomes possible to generate incentives of any desired magnitude by shifting and scaling the payoffs accordingly.


\section{Elicitable Information}
\label{sec:elicitable information}

The results of this section deal with elicitable information when the principal carries out a fixed arbitrary experiment.
We begin with the general case. 
In Sections~\ref{subsec:data made of multiple observations} and~\ref{subsec:data with covariates} we refine the experiment respectively by considering the case of multiple independent observations, and by accounting for observable covariates in the data.

Let us consider an arbitrary experiment $(Y,\pi)$ whose random outcome is $y$. 
Any belief over parameters, $p \in \Delta(\Theta)$, induces a distribution over outcomes, $\lambda_p \in \Delta(Y)$, defined by
\[
\lambda_p(A) = \int_\Theta \pi(A|\theta) p(\de \theta), 
\]
for each measurable set of outcomes $A$. The value $\lambda_p(A)$ is the mean probability $\E_p[\pi(A|\theta)]$ that $y \in A$. We refer to $\lambda_p$ as the \df{mean outcome distribution} associated to belief $p$.
We denote by $\scP^\star$ the information partition that captures the mean outcome distribution, meaning that two parameter distributions $p, q$ are indistinguishable under $\scP^\star$ if (and only if) they induce the same outcome distribution, i.e., $\lambda_p = \lambda_q$.

The result below characterizes the information an arbitrary experiment enables us to elicit.
\begin{theorem}
	\label{thm:baseline}
	An information partition $\scP$ is elicitable with $(Y,\pi)$ if, and only if, $\scP$ is coarser than $\scP^\star$.
\end{theorem}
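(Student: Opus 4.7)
My plan is to prove both directions directly from the definitions: necessity by a symmetry argument that combines the elicitation strict inequality at $(p,q)$ and at $(q,p)$, and sufficiency by constructing a single mechanism built from a strictly proper scoring rule on $\Delta(Y)$. The observation that powers both halves is that, for any report $r$, the expected payoff $\E_p[\phi(r,y)] = \int_Y \phi(r,y)\,\lambda_p(\de y)$ depends on the belief $p$ only through its mean outcome distribution $\lambda_p$, since the $\theta$-integral is absorbed into $\lambda_p$ by Fubini and the definition of $\lambda_p$.

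For necessity, I would fix a mechanism $\phi$ that elicits $\scP$ and a pair $p,q$ lying in different cells of $\scP$; I want to show $\lambda_p \ne \lambda_q$. Suppose, for contradiction, that $\lambda_p = \lambda_q$. The observation above then gives $\E_p[\phi(r,y)] = \E_q[\phi(r,y)]$ for every report $r$, and evaluating at $r=p$ and $r=q$ converts the two elicitation strict inequalities $\E_p[\phi(p,y)] > \E_p[\phi(q,y)]$ and $\E_q[\phi(q,y)] > \E_q[\phi(p,y)]$ into the chain $\E_p[\phi(p,y)] > \E_q[\phi(q,y)] > \E_p[\phi(p,y)]$, which is impossible. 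Hence $\scP$ must be coarser than $\scP^\star$.

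For sufficiency, suppose $\scP$ is coarser than $\scP^\star$ and fix any strictly proper scoring rule $S\colon \Delta(Y) \times Y \to \R$. Define the direct mechanism $\phi(p,y) = S(\lambda_p, y)$. By strict propriety, for all beliefs $p,q$, $\E_p[\phi(p,y)] - \E_p[\phi(q,y)] = \int_Y [S(\lambda_p,y) - S(\lambda_q,y)]\,\lambda_p(\de y) \ge 0$, with equality precisely when $\lambda_p = \lambda_q$. So $\phi$ is incentive-compatible and achieves strict inequality exactly when $p,q$ are distinguishable under $\scP^\star$; because any two beliefs in different cells of $\scP$ are a fortiori in different cells of the refinement $\scP^\star$, the same $\phi$ elicits $\scP$.

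The one step that is not routine is the existence of a strictly proper scoring rule on $\Delta(Y)$ for an arbitrary standard Borel outcome space $Y$. The quadratic (Brier) score handles finite $Y$; for general Polish $Y$ I would invoke a kernel-based construction, such as the squared maximum mean discrepancy with respect to a characteristic kernel on $Y$, which is bounded, jointly measurable in $(\mu,y)$, and strictly proper. Measurability of the resulting mechanism $\phi(p,y) = S(\lambda_p,y)$ then follows from the measurability of $p \mapsto \lambda_p$ induced by the Markov-kernel structure of $\pi$, completing the construction.
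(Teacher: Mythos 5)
Your proof is correct and follows essentially the same route as the paper: necessity from the observation that expected payoffs depend on $p$ only through the mean outcome distribution $\lambda_p$, and sufficiency by composing a bounded strictly proper scoring rule on $\Delta(Y)$ with $p\mapsto\lambda_p$, which elicits exactly $\scP^\star$ and hence any coarser $\scP$. The only divergence is how that score is produced for a general Polish $Y$: the paper averages Brier scores over a countable measure-determining family of events $\{E_i\}$ (obtained from a countable base via Carath\'{e}odory extension), which is a more elementary substitute for your kernel/MMD construction and discharges the one step you flag as non-routine.
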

By coarser and finer, we always mean weakly coarser and weakly finer, respectively. In the sequel, we shall refer to $\scP^\star$ as the \df{maximal information elicitable with $(Y,\pi)$}.

The proof, in Appendix~\ref{appx:Proofs of Elicitable information}, is simple. When payment contingencies are on experiment outcomes, the analyst only cares to assess the outcome distribution, all other information is superfluous. Since this distribution is itself determined by the model parameter, which the analyst is typically unsure about, the analyst's belief over outcomes becomes the mean outcome distribution that results from these two layers of randomization.
Eliciting this information is well-known in special cases of sets $Y$ such as finite and finite-dimensional sets. While not possible for arbitrary sets $Y$, eliciting this information in the more general cases allowed under our assumptions on $Y$ remains possible by a mechanism whose functioning is equivalent to randomizing over a large enough collection of quadratic scoring rules.

One implication of this theorem is that if, for each index $i$ of an arbitrarily rich set, some information partition $\scP_i$ is elicitable with experiment $(Y,\pi)$, possibly using a specific mechanism for each value of $i$, then the join information is elicitable as well. In other words, the information partitions $\scP_i$ are elicitable all at once with a unique mechanism. This observation is instrumental to many examples of our paper.\footnote{This observation owes to the structure we impose on the outcome space. For example, consider an arbitrary space of states of the world and a family of events $E_i$ with $i \in [0,1]$ whose possible occurrence is eventually observed. We can of course elicit the probability of each $E_i$ separately (e.g., with a quadratic scoring rule) but it is not possible to elicit the probabilities of all $E_i$'s at once with the same mechanism in general. For a detailed discussion of the issues involved, see \citet[p. 398]{chambers2021dynamic}.}

An information of common interest is point estimates. 
The two corollaries below give necessary and sufficient conditions for the elicitation of the mean estimate of some real function of the parameter.
In these results, and throughout the paper, $\scG$ refers to the set of bounded measurable functions from $\Theta$ to $\R$. The focus on bounded functions is identical to asking that the expected value of $g(\theta)$ is finite under all beliefs about $\theta$, and so ensures that the task of elicitation is well defined. 
Saying the mean of $g(\theta)$ is elicitable with $(Y,\pi)$ is formally saying that the information partition that captures the mean of $g(\theta)$ is coarser than $\scP^\star$, i.e., that $\E_p[g(\theta)] = \E_q[g(\theta)]$ whenever $p, q$ are indistinguishable under $\scP^\star$.

\begin{corollary}
	\label{cor:estimator}
	If $g \in \scG$ and there exists $w:Y\to\R$ with
	$g(\theta) = \int_Y w(y) \pi(\de y | \theta)$,
	then the mean of $g(\theta)$ is elicitable with $(Y,\pi)$.\footnote{We implicitly assume that $w$ is integrable with respect to $\pi(\cdot | \theta)$ for every $\theta$.}
\end{corollary}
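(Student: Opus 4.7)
My plan is to reduce the corollary to Theorem~\ref{thm:baseline}. The theorem tells us that an information partition is elicitable with $(Y,\pi)$ precisely when it is coarser than $\scP^\star$, the partition induced by the mean outcome distribution $\lambda_p$. So I need to show that the functional $p \mapsto \E_p[g(\theta)]$ depends on $p$ only through $\lambda_p$; equivalently, that $\lambda_p = \lambda_q$ implies $\E_p[g(\theta)] = \E_q[g(\theta)]$. Once this is established, the information partition capturing $\E_p[g(\theta)]$ is (weakly) coarser than $\scP^\star$ by definition, and Theorem~\ref{thm:baseline} delivers elicitability immediately.

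The core computation is a Fubini-type interchange. Starting from the representation $g(\theta) = \int_Y w(y)\,\pi(\de y\mid\theta)$ and the definition of $\lambda_p$, the formal manipulation reads
\[
\E_p[g(\theta)] \;=\; \int_\Theta \left(\int_Y w(y)\,\pi(\de y\mid \theta)\right) p(\de\theta) \;=\; \int_Y w(y)\,\lambda_p(\de y),
\]
and the right-hand side clearly depends on $p$ only through $\lambda_p$. The conclusion then follows by comparing this expression at $p$ and at $q$.

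The one step that requires care is justifying the interchange of integrals. The assumption $g \in \scG$ gives $|\E_p[g(\theta)]| < \infty$, and the footnote guarantees that $w$ is $\pi(\cdot\mid\theta)$-integrable for each $\theta$, but this is not quite the hypothesis of the classical Fubini theorem applied to $w$ on $\Theta \times Y$. The cleanest route is to split $w = w^+ - w^-$ and apply Tonelli's theorem separately to each nonnegative piece, obtaining
\[
\int_\Theta \left(\int_Y w^{\pm}(y)\,\pi(\de y\mid\theta)\right) p(\de\theta) \;=\; \int_Y w^{\pm}(y)\,\lambda_p(\de y),
\]
as equalities in $[0,\infty]$. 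Since the integrability assumption makes both inner integrals finite for every $\theta$ and their difference is the bounded function $g$, one may then subtract (truncating $w$ if necessary and passing to the limit by dominated convergence against $|g|$, which is $p$-integrable) to recover the desired identity. This is the only real obstacle, and it is a standard measure-theoretic detail rather than a conceptual one. Once it is dispatched, the corollary is an immediate consequence of Theorem~\ref{thm:baseline}.
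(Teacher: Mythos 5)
Your proposal is correct and follows essentially the same route as the paper: the paper also reduces to Theorem~\ref{thm:baseline} by showing $\E_p[g(\theta)] = \int_Y w(y)\,\lambda_p(\de y)$ via the law of iterated expectations on the product measure $\mu_p$ over $\Theta\times Y$, and concludes that the mean of $g(\theta)$ induces a partition coarser than $\scP^\star$. Your additional care with the Tonelli/truncation argument for the interchange of integrals is a fine elaboration of a step the paper dispatches in one line, but it is not a different proof.
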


This corollary is a direct implication of the law of iterated expectations, sometimes referred to as the martingale property. A short formal proof is in Appendix~\ref{appx:Proofs of Elicitable information}. The converse does not hold in general, but does hold when the experiment is categorical.

\begin{corollary}
	\label{cor:estimator-converse}
	If $(Y,\pi)$ is categorical, $g \in \scG$, and the mean of $g(\theta)$ is elicitable with $(Y,\pi)$, then there exists $w:Y\to\R$ with $g(\theta) = \sum_y w(y) \pi(y | \theta)$.
\end{corollary}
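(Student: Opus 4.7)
The plan is to construct $w$ by finite-dimensional linear algebra, leveraging the fact that $Y$ is finite. I view each $\pi(\cdot|\theta)$ as a vector $T(\theta) \in \R^Y$ with coordinates $T(\theta)(y) = \pi(y|\theta)$. The desired conclusion $g(\theta) = \sum_y w(y) \pi(y|\theta)$ says precisely that $g(\theta) = L(T(\theta))$ for some linear functional $L$ on $\R^Y$, because every linear functional on $\R^Y$ has the form $v \mapsto \sum_y w(y) v(y)$. So the problem reduces to constructing such an $L$.

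The first step is to define $L$ on the subspace $W = \mathrm{span}\{T(\theta) : \theta \in \Theta\} \subseteq \R^Y$ by
\begin{equation*}
L\!\left(\sum_i a_i T(\theta_i)\right) = \sum_i a_i g(\theta_i),
\end{equation*}
for every finite real linear combination. The crux is to check well-definedness, i.e., that a linear dependence $\sum_i a_i T(\theta_i) = 0$ in $\R^Y$ forces $\sum_i a_i g(\theta_i) = 0$. I split $a_i = a_i^+ - a_i^-$ into nonnegative parts. Summing coordinates in $\R^Y$ gives $\sum_i a_i = 0$, since $\sum_y \pi(y|\theta_i) = 1$ for each $i$, so $S \eqdef \sum_i a_i^+ = \sum_i a_i^-$. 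When $S = 0$ all $a_i$ vanish. When $S > 0$, the measures $p = S^{-1}\sum_i a_i^+ \delta_{\theta_i}$ and $q = S^{-1}\sum_i a_i^- \delta_{\theta_i}$ belong to $\Delta(\Theta)$ and induce the same mean outcome distribution $\lambda_p = \lambda_q$, directly from $\sum_i a_i^+ T(\theta_i) = \sum_i a_i^- T(\theta_i)$. The elicitability hypothesis, as spelled out in the paragraph preceding Corollary~\ref{cor:estimator} and as a consequence of Theorem~\ref{thm:baseline}, then yields $\E_p[g(\theta)] = \E_q[g(\theta)]$, which rearranges to $\sum_i a_i g(\theta_i) = 0$.

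Once $L$ is a well-defined linear functional on the subspace $W$ of the finite-dimensional space $\R^Y$, it extends (non-uniquely in general) to a linear functional on all of $\R^Y$, which has the form $v \mapsto \sum_y w(y) v(y)$ for some $w \in \R^Y$. Evaluating at $v = T(\theta)$ delivers the required identity. The main obstacle is really the well-definedness step: the categorical hypothesis is essential there, because it guarantees that the image of $T$ lies in a finite-dimensional ambient space where linear dependences reduce to finite combinations, and, combined with the unit-mass constraint $\sum_y \pi(y|\theta) = 1$, lets us rebalance any such dependence into two honest probability measures on which the elicitability hypothesis can be applied. Without the finiteness of $Y$ one would need additional closure or integrability hypotheses to extend the functional, which is precisely why the corresponding converse of Corollary~\ref{cor:estimator} fails in general.
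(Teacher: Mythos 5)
Your proof is correct. The decisive step---rebalancing a vanishing linear dependence $\sum_i a_i \pi(\cdot|\theta_i) = 0$ into two probability measures $p,q$ with $\lambda_p = \lambda_q$ by splitting coefficients into positive and negative parts and using $\sum_y \pi(y|\theta) = 1$, then invoking the elicitability hypothesis to conclude $\E_p[g(\theta)] = \E_q[g(\theta)]$---is exactly the engine of the paper's proof, where it appears as the Jordan decomposition $\mu = \alpha p - \alpha q$ of a general finite signed measure. Where you diverge is in the linear algebra wrapped around that step. The paper works in the dual: it defines functionals $\Gamma_i(\mu) = \int_\Theta \pi(y_i|\theta)\,\mu(\de\theta)$ and $\Gamma(\mu) = \int_\Theta g(\theta)\,\mu(\de\theta)$ on the space of \emph{all} finite signed measures on $\Theta$, shows $\bigcap_i \ker\Gamma_i \subseteq \ker\Gamma$, and invokes the Fundamental Theorem of Duality to place $\Gamma$ in the span of the $\Gamma_i$, finally evaluating at Dirac measures to recover $w$. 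You work in the primal: you define the functional directly on the subspace $\mathrm{span}\{\pi(\cdot|\theta) : \theta\in\Theta\} \subseteq \R^Y$ and extend it to all of $\R^Y$ by elementary finite-dimensional linear algebra. Your route buys self-containedness: no appeal to the duality theorem, and only finitely supported signed measures ever arise, so the decomposition is just a splitting of finitely many real coefficients. The paper's route is the one that scales: essentially the same dual-pair machinery reappears in Lemma~\ref{lem:multiple-1} used to prove Theorem~\ref{thm:multiple}, where $Y$ is no longer finite and one must pass to the closure of the span. Your closing remark correctly identifies that the finiteness of $Y$ is precisely what makes the extension step free and why the converse of Corollary~\ref{cor:estimator} fails in general.
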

Hence, in the categorical case, the mean of $g(\theta)$ is elicitable precisely when there exists an unbiased estimator for $g(\theta)$ taking as input the principal's observations.
This second corollary, proved in Appendix~\ref{appx:Proofs of Elicitable information}, is an application of the Fundamental Theorem of Duality.
The examples below illustrate these results.

\begin{example}[The German tank problem]
\label{ex:german tank problem}
Consider a population that includes an unknown number of units numbered consecutively starting from 1, together with the experiment that consists in observing just one unit drawn at random from this population (each unit has the same probability of a draw).\footnote{The German tank problem is a classical example in statistics about the estimation of the maximum of a discrete uniform distribution, whose original application is the estimation of the production of tanks and other military items manufactured during World War II \citep{goodman1954some}.}
In this example, the underlying statistical model is the uniform discrete distribution. The parameter $\theta$ is the population size, the outcome is the numbered unit, thus $\Theta=Y=\{1,2,\dots\}$, and the Markov kernel is
\[
\pi(k|\theta) = 
\begin{cases} 
1/\theta &\text{if } k \le \theta, \\ 
0 &\text{if } k > \theta. 
\end{cases}
\]
With this single data point, it is possible to elicit the full information on the analyst's belief about $\theta$.
To see this, we apply Corollary~\ref{cor:estimator} with the functions $g_m:\Theta\to\R$ defined as
\[
g_m(\theta) = \sum_{k=1}^\infty w_m(k) \pi(k | \theta)
\]
for $m \in \N_+$ and
\[
w_m(k) = 
\begin{cases} 
1 &\text{if } k \le m, \\ 
-m &\text{if } k = m+1,\\ 
0 &\text{if } k > m+1. 
\end{cases}
\]
Corollary~\ref{cor:estimator} says that the mean of every $g_m(\theta)$ is elicitable, and by Theorem~\ref{thm:baseline} these means are elicitable all at once with a same mechanism. Observe that $\E[g_m(\theta)] = \Pr[\theta \le m]$, since $g_m(\theta)=0$ if $\theta \ge m+1$ and $g_m(\theta)=1$ if $\theta \le m$. Therefore, the c.d.f. of $\theta$, which captures the full information on the analyst's belief, is elicitable. 
\end{example}

\begin{example}[Poisson model]
\label{ex:poisson model}
Consider a population of individuals who borrow money from a bank or an investor. There is a risk of default, and the number of defaults over a reference time period is modeled as a Poisson distribution, whose parameter $\theta$ captures the default rate, presumed constant for the term of interest. 
In this context, the experiment that reveals the number of loan defaults over the reference period has the outcome set $Y=\{0,1,2,\dots\}$ and the Markov kernel
\[
\pi(k|\theta) = \frac{\theta^k}{k!} e^{-\theta}.
\]
With such data, it is also possible to elicit the full information on the analyst's belief.
To see this, let us define for each $t \in \R$ the function on outcomes $w_t(k) = (1+\I t)^k$ where $\I$ is the imaginary unit,\footnote{Alternatively, we can leverage results on the identification of mixtures of distributions. The fact that mixtures of Poisson distributions are identified in the statistical sense (see p. 245 of \citealp{teicher1961identifiability}) implies that one can infer exactly the distribution over Poisson parameters from the induced distribution over outcomes, which the mechanism used in the proof of Theorem~\ref{thm:baseline} elicits indirectly. We will make use of this fact in our example on Gaussian linear models.}
and then let
\begin{equation*}
	g_t(\theta) 
		= \sum_{k=0}^\infty w_t(k) \pi(k|\theta) 
		= e^{-\theta} \sum_{k=0}^\infty \frac{(\theta+\I t \theta)^k}{k!} 
		= e^{\I t \theta}.
\end{equation*}
We apply Corollary~\ref{cor:estimator} and Theorem~\ref{thm:baseline}, and get that the means of all $g_t(\theta)$ are elicitable at once. Observing that $t \mapsto \E[e^{\I t \theta}]$ is the characteristic function of $\theta$, the above experiment elicits the full distribution over $\theta$---and so the full information on the analyst's belief.
\end{example}

\begin{example}[Bernoulli model]
\label{ex:bernoulli model}
Let us return to the example raised in the Introduction.
There is a large population of individuals---a continuum---and a new drug is effective on an unknown fraction of the population. This fraction is the parameter of interest, and the underlying statistical model is the Bernoulli model. 
We look at the experiment that consists in performing a single clinical trial on a random individual from the population. 

Here, the set of possible parameters is $\Theta=[0,1]$, and the experiment is described by the binary outcome set $Y=\{0,1\}$ (outcome 1 if the drug is effective on the individual and outcome 0 otherwise) with the Markov kernel
\begin{align*}
\pi(1|\theta) &= \theta,\\
\pi(0|\theta) &= 1- \theta.
\end{align*}
By Theorem~\ref{thm:baseline}, the elicitable information is precisely the mean fraction of the population on which the vaccine is effective. No further information can be elicited with a single observation.

\end{example}


\subsection{Data Made of Multiple Observations}
\label{subsec:data made of multiple observations}

Suppose the data the principal collects is composed of independent observations from one or more experiments. 
Specifically, consider $n$ arbitrary experiments $(Y_i,\pi_i)$, $i=1,\dots,n$, which may be identical, and let $y_i$ denote the outcome randomly and independently generated by $(Y_i,\pi_i)$, conditionally on the model parameter. The principal observes the vector $y=(y_1, \dots, y_n) \in Y_1 \times \cdots \times Y_n$. The experiment that generates $y$ is a compound experiment, which is the product of the $n$ elementary experiments, written $(Y_1,\pi_1) \otimes \cdots \otimes (Y_n,\pi_n)$. 

Of course, Theorem~\ref{thm:baseline} continues to apply to this compound experiment, but it can be more convenient to work with the individual experiments that make the product. Our second result, below, provides sufficient conditions for mean information to be elicitable overall given knowledge of information elicitable with these individual experiments.

\begin{theorem}
	\label{thm:multiple}
	If, for $i=1, \dots, n$, the experiment $(Y_i,\pi_i)$ elicits the mean of $g_i(\theta)$ with $g_i \in \scG$, then the product experiment elicits the mean of $g(\theta)$ with $g = g_1 \times \cdots \times g_n$.
\end{theorem}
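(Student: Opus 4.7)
The plan is to construct an unbiased estimator for $g(\theta)$ under the product experiment and then invoke Corollary~\ref{cor:estimator} to conclude that the mean of $g(\theta)$ is elicitable.

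First I would extract, for each $i$, a bounded measurable $w_i:Y_i\to\R$ satisfying
\[
g_i(\theta) \;=\; \int_{Y_i} w_i(y_i)\, \pi_i(dy_i \mid \theta).
\]
In the categorical case this is exactly Corollary~\ref{cor:estimator-converse}. In the general standard Borel case the argument runs as follows. By the hypothesis and Theorem~\ref{thm:baseline}, the linear functional $p\mapsto \E_p[g_i]$ on $\Delta(\Theta)$ factors through the linear map $p\mapsto\lambda_p^i$ into $\Delta(Y_i)$; hence it descends to a well-defined affine functional on the convex image $\{\lambda_p^i:p\in\Delta(\Theta)\}$. A Hahn-Banach extension to the ambient space of signed measures, combined with a Riesz-type representation, produces the desired bounded measurable $w_i$.

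Next I would assemble the product estimator $w(y_1,\dots,y_n):=\prod_{i=1}^n w_i(y_i)$, which is bounded and measurable on $Y_1\times\cdots\times Y_n$. Because the $y_i$'s are independent conditional on $\theta$ under the product kernel $\pi^\otimes=\pi_1\otimes\cdots\otimes\pi_n$, Fubini yields
\[
\int w(y)\,\pi^\otimes(dy\mid\theta) \;=\; \prod_{i=1}^n \int w_i(y_i)\,\pi_i(dy_i\mid\theta) \;=\; \prod_{i=1}^n g_i(\theta) \;=\; g(\theta).
\]
Applying Corollary~\ref{cor:estimator} to the product experiment, with $g$ and the estimator $w$, then gives that the mean of $g(\theta)$ is elicitable with the product experiment, as desired.

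The routine parts are Steps two and three: once unbiased estimators are available, their tensor product is automatically an unbiased estimator for the product function by conditional independence and Fubini, and Corollary~\ref{cor:estimator} closes the argument. The main obstacle is therefore the first step in the non-categorical case, where one must upgrade Corollary~\ref{cor:estimator-converse} beyond finite $Y_i$. Here some care is needed to ensure that the functional extended via Hahn-Banach is represented by integration against a bounded measurable $w_i$ (so that Corollary~\ref{cor:estimator}'s integrability hypothesis applies), rather than by a more general continuous linear functional on the signed-measure space.
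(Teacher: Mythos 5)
There is a genuine gap, and it sits exactly where you flagged ``the main obstacle'': Step 1 in the non-categorical case. Your plan requires an \emph{exact} unbiased estimator $w_i$ with $g_i(\theta)=\int_{Y_i} w_i(y_i)\,\pi_i(\de y_i|\theta)$, but elicitability of the mean of $g_i(\theta)$ does not imply this in general---the paper states explicitly, right after Corollary~\ref{cor:estimator}, that the converse of that corollary fails for non-categorical experiments, and Corollary~\ref{cor:estimator-converse} is deliberately restricted to finite $Y$. The Hahn--Banach/Riesz route does not repair this: the functional $p\mapsto\E_p[g_i(\theta)]$ does descend to a well-defined linear functional on the span of the image of $\Delta(\Theta)$ inside the signed measures on $Y_i$, but to represent its extension by integration against a bounded measurable $w_i$ you need that functional to be continuous for the weak topology paired with the bounded measurable functions on $Y_i$ \emph{already on that subspace}, and nothing in the hypothesis delivers that. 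In finite dimensions every linear functional is automatically continuous, which is precisely why the categorical case (Corollary~\ref{cor:estimator-converse}) works and the general case does not. If exact representation were always available, Theorem~\ref{thm:multiple} would indeed reduce to your Fubini computation, and the paper would not need the lemmas it proves in the appendix.

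What the hypothesis actually gives (the paper's Lemma~\ref{lem:multiple-1}) is that $g_i$ lies in the \emph{weak closure} $\widebar{\scL_i}$ of the linear span $\scL_i$ of the maps $\theta\mapsto\pi_i(A|\theta)$. Your tensor-product idea is then exactly right at the level of $\scL_i$ itself: a product $\pi_1(A|\theta)\,\pi_2(B|\theta)$ equals $(\pi_1\otimes\pi_2)(A\times B|\theta)$ by conditional independence, which is your Fubini step. The missing ingredient is passing this identity through the closure: one must show that the pointwise product of $\widebar{\scL_1}$ and $\widebar{\scL_2}$ is contained in the closure of the pointwise product of $\scL_1$ and $\scL_2$ (the paper's Lemma~\ref{lem:multiple-2}, proved by a two-stage approximation against finitely many test measures, first absorbing $g_1$ into the measures to approximate $g_2$, then absorbing the chosen $f_2$ to approximate $g_1$). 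With that lemma, $g_1 g_2\in\widebar{\scL}$ for the product experiment and Lemma~\ref{lem:multiple-1} closes the argument. So your outline is correct in spirit, and correct as stated for categorical experiments, but the general case requires this approximation machinery rather than an exact-estimator extraction.
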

The proof of Theorem~\ref{thm:multiple} is in Appendix~\ref{appx:Proofs of Elicitable information}.

Several implications are worth noting. First, recall that by Theorem~\ref{thm:baseline}, with one observation from an experiment, we can elicit the mean probability of each set of outcomes. Combining this fact with Theorem~\ref{thm:multiple}, we conclude that collecting multiple independent observations from the same experiment enables us to elicit the higher order moments. 
Second, for any experiment and any $g \in \scG$, if the mean of $g(\theta)$ is elicitable with $n$ observations from this experiment, then the variance of $g(\theta)$ is elicitable with $2 n$ observations: Since the variance is written $\E[g(\theta)^2] - \E[g(\theta)]^2$, this fact follows by applying Theorem~\ref{thm:multiple} on the product of the two experiments that each supply $n$ observations. Similarly, if we have two experiments $(Y,\pi_Y)$ and $(Z,\pi_Z)$ with which we can elicit the mean of $g_Y(\theta)$ and $g_Z(\theta)$, respectively, then their covariance is elicitable with one observation from each experiment.

The examples below illustrate simple applications of Theorem~\ref{thm:multiple}. In Section~\ref{subsec:data with covariates}, we put the theorem to work for several statistical models.

\begin{example}[Elicitation of the expertise level]
Suppose the principal collects two independent observations from an arbitrary experiment $(Y,\pi)$ that is identified. 
The composite experiment that produces the principal's data is the product $(Y,\pi) \otimes (Y,\pi)$.

\emph{We claim that this experiment enables the principal to learn whether the analyst knows the parameter and, when the analyst does know, to elicit the true parameter value.}
By knowing the parameter, we mean that the analyst is absolutely certain about the parameter value.
Put formally, this means that under the information partition that captures the maximal information elicitable by this experiment, the probability measure that puts full mass on a single parameter value is distinguishable from any other belief in $\Delta(\Theta)$. The argument goes as follows.

In the proof of Theorem~\ref{thm:baseline}, we have shown the existence of a countable collection of measurable subsets of $Y$, $\{E_i : i \in \N\}$, such that for any two probability distributions over $Y$, $\mu$ and $\nu$, we have $\mu=\nu$ if and only if for all $i$, $\mu(E_i)=\nu(E_i)$; that is, outcome distributions are identified on this countable collection of sets. 
And since the experiment is identified, any parameter value $\theta_0$ is uniquely identified by the values $\pi(E_i|\theta_0)$, $i \in \N$.

Let $g_i(\theta) = \pi(E_i|\theta)$.
By Theorem~\ref{thm:baseline}, experiment $(Y,\pi)$ elicits the means of all $g_i(\theta)$. Thus one observation enables the principal to learn the mean of every $g_i(\theta)$, and by Theorem~\ref{thm:multiple}, the two observations makes it possible to elicit the variance of every $g_i(\theta)$.

Let $p \in \Delta(\Theta)$ be the analyst's belief. 
To assess if the analyst knows the true parameter, the principal can examine  the variances elicited. 
If $\var_p[g_i(\theta)]=0$ for all $i$, then, letting $\gamma_i = \E_p[g_i(\theta)]$, we have that, for every $i$,  with $p$-probability 1, $g_i(\theta) = \gamma_i$. Since there are countably many indices $i$, we may reverse the order of the quantifiers and get that with $p$-probability 1, $g_i(\theta) = \gamma_i$ for all $i$. Hence, there exists $\theta_0$ such that $p(\theta=\theta_0)=1$: The analyst believes he knows the parameter. And conversely, if the analyst believes the parameter value is $\theta_0$ for sure, then clearly $\var_p[g_i(\theta)]=0$.

Next, if, after examining the variances, the principal concludes that the analyst knows the true parameter parameter value, here referred to as $\theta_0$, the principal infers $\pi(E_i|\theta_0)$ from the values of the elicited means $\E_p[g_i(\theta)]$. The property of identification mentioned above guarantees that $\theta_0$ may be deduced from the values $\pi(E_i|\theta_0)$, $i \in \N$.
\end{example}

\begin{example}[Elicitation of probability density functions]
When the analyst is not sure about the parameter but may still possess valuable information, the principal can work with more than two observation to approximate the analyst's estimated density function over parameters.

Consider a categorical experiment $(Y,\pi)$ that is identified, and let us enumerate the elements of $Y$ from 1 to $m$.
We may, without loss of generality, ask that the parameter $\theta=(\theta_1, \dots, \theta_m)$ represents the outcome distribution and take for $\Theta$ the $(m-1)$-simplex of $\R^m$ (the corresponding model is sometimes referred to as the multinoulli model).

Suppose the researcher's belief on $\theta$ is captured by a Lipschitz-continuous probability density function $f$. We will demonstrate the following claim. \emph{With $n$ independent observations, the principal can elicit from the analyst a function $\hat f$ that approximates the true density of the analyst according to}
\begin{equation*}
	\int_\Theta |f(\theta) - \hat f(\theta)|^2 \de \theta = O(1/n).
\end{equation*}
Notice that the left hand side is the mean integrated squared error commonly used in statistics for the estimation of probability densities.

The proof of this claim builds once again on Theorem~\ref{thm:multiple}.
We begin by defining two classes of functions, $\scC_D$ will be the class of all Lipschitz-continuous densities over $\Delta^{m-1}$, and $\scC_P$ will be the class of all polynomials of $m-1$ variables of degree $n$.
Let $P_1, \dots, P_K$ be an orthonormal basis of $\scC_P$ with respect to the inner product on $L^2$; that is, a basis that satisfies
\begin{equation*}
\int_\Theta P_i P_j = \1\pc{i=j},
\end{equation*}
where we use the shorthand integral notation.
Such bases are easily constructed using the Gram–Schmidt process of orthonormalization, or even more simply using the tensor product basis from Legendre polynomials.

From Theorem~\ref{thm:multiple}, we know that with $n$ independent observations, we can elicit the mean, $c_k$, of every $P_k(\theta_1, \dots, \theta_m)$ for $k=1, \dots, K$. With this information, we can compute the polynomial 
\[
P = \sum_k c_k P_k
\quad\text{ with }\quad
c_k = \int_\Theta P_k f.
\]
We observe that this polynomial is the orthogonal projection of $f$ on $\scC_D$, hence,
\begin{equation*}
\int_\Theta (f - P)^2 = \min_{Q \in \scC_P} \int_\Theta (f - Q)^2.
\end{equation*}
By the multidimensional version of Jackson's inequality (Theorem 2 of \citealp{ganzburg1981multidimensional}), there exists a constant $\gamma$ that depends on $f$ such that
\begin{equation*}
\min_{Q \in \scC_P} \int_\Theta (f - Q)^2 \le \gamma/n.
\end{equation*}

We conclude that, from the information that can be elicited truthfully from the researcher, we can infer an approximation $\hat f$ of the density $f$ that satisfies
\begin{equation*}
\int_\Theta |f - \hat f|^2 = O(1/n)
\end{equation*}
where $n$ is the size of the principal's dataset.
\end{example}


\subsection{Data with Covariates}
\label{subsec:data with covariates}

Suppose the data observed by the principal includes features or characteristics of the units of observation. 
The principal now observes a pair $(x,y)$ where $y \in Y$ continues to be called the outcome and $x \in X$ is a covariate, in a general sense, the set of possible covariate values being arbitrary. The distribution of covariates is known and part of the model. Both $X$ and $Y$ are standard Borel spaces.

The experiment that generates the random pair (covariate,outcome) can be represented as a compound experiment made of two parts. 
First, a covariate $x$ is randomly generated. 
Second, an outcome $y$ is randomly generated by an experiment $(Y,\pi_x)$, where the subscript $x$ captures the possible dependence of the outcome on the covariates. 
These compound experiments are mixtures of elementary experiments. Formally, a mixture of elementary experiments $(Y,\pi_x)$, $x \in X$, is an experiment $(X \times Y, \pi)$ that satisfies
\[
	\pi(A \times B) = \int_A \pi_x(B | \theta) \mu(\de x),
\]
where $x \mapsto \pi_x(B | \theta)$ is measurable, and where $\mu$ is the probability distribution over covariates. 

Of course, Theorem~\ref{thm:baseline} continues to apply, but it can be more convenient to work with the elementary experiments that make the mixture. Our second result links the information elicited by elementary experiments to information that can be elicited by the mixture. It provides sufficient conditions on the information elicited when the principal has access to covariate data. 

In the theorem below, we consider any mixture experiment of the form described above, and let $\scP(x)$ be an information partition elicitable with $(Y,\pi_x)$.
\begin{theorem}
	\label{thm:covariates}
	Let $\scP_0$ be any information partition and $\{ x_i\}$ be a finite or countably infinite collection of covariates independently drawn from $\mu$. 
	If, with nonzero probability, the join of the partitions in $\{ \scP(x_i) \}$ is finer than $\scP_0$, then $\scP_0$ is elicitable with the mixture experiment.
\end{theorem}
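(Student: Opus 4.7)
The plan is to apply Theorem~\ref{thm:baseline} twice---once to the mixture experiment and once to each elementary experiment $(Y,\pi_x)$---and then pass from a $\mu$-almost-sure statement about covariates to a statement about the i.i.d.\ sample $\{x_i\}$. By Theorem~\ref{thm:baseline} applied to the mixture, $\scP_0$ is elicitable precisely when it is coarser than the information capturing the mean outcome distribution $\lambda_p$ on $X\times Y$. So I would fix $p,q\in\Delta(\Theta)$ with $\lambda_p=\lambda_q$, and aim to show that $p$ and $q$ lie in the same block of $\scP_0$.

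First, I would disintegrate the two mean outcome distributions along the marginal $\mu$ on $X$. Because $X$ and $Y$ are standard Borel, a regular conditional distribution of $y$ given $x$ exists, and the definition of the mixture forces it to equal $\lambda_p^x(B)=\int_\Theta \pi_x(B\mid\theta)\,p(\de\theta)$, which is exactly the mean outcome distribution of the elementary experiment $(Y,\pi_x)$ under belief $p$. Hence $\lambda_p=\lambda_q$ yields $\lambda_p^x=\lambda_q^x$ for $\mu$-a.e.\ $x$. Applying Theorem~\ref{thm:baseline} to each such $x$ then tells me that $p$ and $q$ are indistinguishable under the maximal information elicitable with $(Y,\pi_x)$, and so also under the coarser partition $\scP(x)$.

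Second, I would pass from this $\mu$-a.e.\ statement to a statement about the whole sample. For each fixed $i$, the event $A_i=\{p\sim_{\scP(x_i)} q\}$ has probability one since $x_i$ is distributed according to $\mu$; a countable intersection then yields a probability-one event $F$ on which $p$ and $q$ lie in the same block of every $\scP(x_i)$, hence in the same block of the join $\bigvee_i\scP(x_i)$. Intersecting $F$ with the positive-probability event on which the join refines $\scP_0$ produces a positive-probability event on which $p\sim_{\scP_0}q$. Because $\scP_0$-indistinguishability of $p$ and $q$ does not depend on the realization of $\{x_i\}$, it must hold outright.

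The main obstacle I anticipate is technical rather than conceptual. The disintegration of the first step needs a regular version of the conditional distribution, which is supplied by the standard Borel hypothesis on $X$ and $Y$ through the usual disintegration theorem. Measurability of the events involved in the second step---in particular of ``$p\sim_{\scP(x)} q$'' as a function of $x$---is implicit in the statement of the theorem, which already presumes that ``the join refines $\scP_0$'' is a measurable event of positive probability.
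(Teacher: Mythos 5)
Your argument is correct, but it takes a genuinely different route from the paper's. The paper's proof is constructive: it fixes, for each $x$, a bounded mechanism $\psi(\cdot\,|x)$ that elicits $\scP(x)$, forms the compound mechanism $\phi(p,(x,y)) = \psi(p,y|x)$ for the mixture, and verifies the strict inequality $\E_p[\phi(p,(x,y))] > \E_p[\phi(q,(x,y))]$ directly for $p,q$ distinguishable under $\scP_0$. The key device there is rewriting the single-draw expectation $\E_{x\sim\mu}[\,\cdot\,]$ as $\int_{X^\infty}\sum_i 2^{-i}(\cdot)\,\mu(\de x^\infty)$ over i.i.d.\ sequences, so that incentive compatibility makes every summand nonnegative while the positive-probability event on which the join refines $\scP_0$ forces some summand to be strictly positive. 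You instead work entirely at the level of the characterization in Theorem~\ref{thm:baseline}: you show $\scP_0$ is coarser than the mixture's $\scP^\star$ by disintegrating $\lambda_p=\lambda_q$ into $\lambda_p^x=\lambda_q^x$ for $\mu$-a.e.\ $x$, pushing this through Theorem~\ref{thm:baseline} for each elementary experiment, and then running the almost-sure/positive-probability intersection over the countable sample. Both are sound. The paper's version buys an explicit eliciting mechanism with controlled (bounded) payoffs, which matters elsewhere in the paper; yours buys a shorter, more conceptual argument that makes clear the result is purely a statement about mean outcome distributions. Your approach does lean on essential uniqueness of disintegrations and on the measurability of $\{x : \lambda_p^x = \lambda_q^x\}$; the latter is fine because equality of outcome distributions is detected on the countable separating family $\{E_i\}$ constructed in the proof of Theorem~\ref{thm:baseline}, and each $x\mapsto\lambda_p^x(E_i)$ is measurable, so you could make that remark explicit rather than deferring it to the hypotheses of the theorem.
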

The proof of Theorem~\ref{thm:covariates} is in Appendix~\ref{appx:Proofs of Elicitable information}. The examples that follow illustrate the use of both Theorems~\ref{thm:multiple} and~\ref{thm:covariates} in the context of various statistical models.

\begin{example}[Gaussian linear model]
\label{ex:gaussian linear model}
This example considers the Gaussian linear model that relates a real-valued dependent variable $y$ (the outcome) to a covariate vector $x=(x^{(1)},\dots,x^{(K)})\in\R^K$ by
\begin{equation*}
	y = \beta_0 + \beta_1 x^{(1)} + \cdots + \beta_K x^{(K)} + \sigma \epsilon,
\end{equation*}
where $\epsilon$ is a standard normal error. The model parameter is the $(K+2)$-dimensional vector composed of the vector of coefficients $\beta=(\beta_0, \dots,\beta_K) \in \R^{K+1}$ and the variance of the error term, $\sigma^2 \in \R_+$.
Let $\theta = (\beta,\sigma^2) \in \Theta$ and let $\mu$ be the probability distribution over covariates. Assume the support of $\mu$ has a nonempty interior and $\Theta = I^{K+2}$ where $I$ is a compact interval of $\R$. 

\emph{We claim that under the Gaussian linear model just described, the full information of the analyst is elicitable with just one observation.} 

The proof of this claim is composed of two simple steps, noting that the relevant experiment is a mixture experiment that generates the data point $(x,y)$. 
First, fix an arbitrary $x \in X$ and write $h(x) = \beta_0 + \beta_1 x^{(1)} + \cdots + \beta_K x^{(K)}$.
A belief (probability distribution) over $\theta$ induces a belief over $(h(x),\sigma^2)$. Mixtures of Gaussian distributions are identifiable under the condition that the mean and variance of the individual Gaussians are in compact set \citep{bruni1985identifiability}. 
Therefore, the distribution over $y$ allows the inference of the analyst's belief over $(h(x),\sigma^2)$, and, for every $a,b \in \R$, of the distribution of $a h(x) + b \sigma^2$---that is, we learn the distribution of every one-dimensional projection of $(h(x),\sigma^2)$. 
Let $\scP(x)$ be the information partition that captures the knowledge of each of these distributions. 
By Theorem~\ref{thm:baseline}, this information partition is elicited from the elementary experiment that generates $y$ conditionally on $x$ following the Gaussian model just described.

Second, let $\scP_0$ be the information partition associated with the full information on the analyst's belief, $\scP_0 = \{ \{p\} : p \in \Delta(\Theta)\}$, and let $x_1, x_2, \dots,$ be an infinite sequence of covariates independently drawn according to $\mu$. 
By the assumption imposed on $\mu$, with positive probability, this sequence is dense in some open set $O \subset \R^K$. Any finite-dimensional distribution is uniquely determined by its one-dimensional projections \citep{cramer1936theorems} and hence, the density of the sequence of covariates in $O$ together with the fact that a multivariate c.d.f. is right-continuous implies that with positive probability, the join of the partitions in $\{\scP(x_i)\}$ is finer than $\scP_0$. The claim then follows from a direct application of Theorem~\ref{thm:covariates}. 
\end{example}

\begin{example}[Semiparametric linear model]
\label{ex:semiparametric linear model}
This example extends the model of Example~\ref{ex:gaussian linear model}. We continue to postulate the linear relationship
\[
	y = \beta_0 + \beta_1 x^{(1)} + \cdots + \beta_K x^{(K)} + \sigma \epsilon
\]
but we no longer restrict the shape of the error term $\epsilon$. The model parameter $\theta$ is now composed of the finite-dimensional vector $(\beta_1, \dots, \beta_K) \in I^K$ (where $I$ continues to be a compact interval of $\R$) and a real distribution $\theta_\epsilon$ with $\theta_\epsilon = \beta_0 + \sigma \epsilon$ such that $\beta_0, \sigma^2 \in I$. 
It is convenient to work with the parameter space $\Theta = I^K \times \Theta_\epsilon$, where the parametric part of the model, $I^K$, is the space of possible values for the coefficients of the independent variables, and the nonparametric part of the model, $\Theta_\epsilon$, is the space of all real distributions whose mean and variance belong to $I$. This space is rich but satisfies the assumption of Section~\ref{sec:model}.\footnote{Specifically, the parameter space is a Polish space as a product of the Euclidean space $I^K$ and the space $\Theta_\epsilon$ endowed with the Wasserstein distance between distributions.}
Let $\mu$ be the covariate distribution and assume that for every $\alpha \in\R^K$, with nonzero probability, the covariate $x$ satisfies $\alpha_1 x^{(1)} + \cdots + \alpha_K x^{(K)} \not\in \{0,1\}$. This assumption is weaker than the assumption on $\mu$ made in the preceding example. 

\emph{We claim that under the semiparametric linear model just described, and with one observation, the principal can elicit from the analyst the mean belief of $\beta_0, \dots,\beta_K$.}

To prove the claim, we first take any $x \in X$ and let $h(x) = \beta_0 + \beta_1 x^{(1)} + \cdots + \beta_K x^{(K)}$. Corollary~\ref{cor:estimator} shows that the mean of $h(x)$ is elicitable with the experiment that generates $y$ randomly conditionally on $x$ (this is obtained by using $w(y) = y$).
Let $\scP(x)$ be the corresponding information partition.

Second, let $\scP_0$ be the information partition associated to the mean belief of $\beta_0, \dots,\beta_K$, and let $x_1, \dots, x_{K+1}$ be drawn independently at random from $\mu$. For any given belief $p$ over $\theta$, the means of $\beta_0, \dots, \beta_K$ satisfy the equalities
\[
\E_p[\beta_0] + \E_p[\beta_1] x_i^{(1)} + \cdots + \E_p[\beta_K] x_i^{(K)} = \E_p[h(x_i) | x_i] \qquad \forall i =1, \dots, K+1.
\]  
The assumption imposed on $\mu$ implies that with positive probability, the matrix
\[
\Matrix{1,x_1^{(1)},\dots,x_1^{(K)};\vdots,\vdots,\ddots,\vdots;1,x_{K+1}^{(1)},\dots,x_{K+1}^{(K)}}
\]
has full rank, which, in turn, implies that with positive probability, one can infer the mean of each $\beta_i$ from knowledge of $\E_p[h(x_i) | x_i]$. And hence, with positive probability, the join of $\scP(x_1), \dots, \scP(x_{K+1})$ is finer $\scP_0$. We can then apply Theorem~\ref{thm:covariates} to get the claim. 
\end{example}

\begin{example}[Nonparametric classification]
\label{ex:nonparametric classification}

We now look at a nonparametric model of classification in which labels take binary values 0 or 1. 
The set of possible vectors of features, $X$, is a compact subset of $\R^K$ for $K \ge 1$.
In this model, the parameter $\theta$ is the likelihood function that maps feature vectors to the range $[0,1]$, so  $\theta(x)$ is the probability of observing label 1 for a unit with feature vector $x$. 
We assume $\theta \in \Theta$, with $\Theta$ the set of all continuous functions from $X$ to $[0,1]$. 
This space satisfies the assumption in Section~\ref{sec:model}.\footnote{The space of continuous functions from a compact metric space to a Polish space is Polish by Theorem 4.19 of \citet{kechris1995classical}.}

Suppose the distribution over covariates, $\mu$, has full support, and consider once again the experiment that generates just one random observation $(x,y)$, where $x$ is the feature and $y$ is the label of a unit randomly drawn.
We claim that \emph{under the nonparametric classification model just described, with one observation, the principal can elicit from the analyst the mean label likelihood function.}

First, observe that for an arbitrary dense subset $D$ of $X$, if for every $x \in D$ the mean of $\theta(x)$ is identical when $\theta$ is distributed according to two different beliefs over parameters, $p$ and $q$, then by continuity of the mapping $x \mapsto \int_\Theta \theta(x) p(\de \theta)$, the mean of $\theta(x)$ continues to be identical for every $x \in X$. That the mapping is continuous follows from the Dominated Convergence Theorem. Second, Example~\ref{ex:bernoulli model} shows that the experiment that generates, at random, a label associated with a given feature vector $x$ elicits the mean of $\theta(x)$. Let $\scP(x)$ be the information partition that captures this information.

Finally, let $\scP_0$ be the information partition associated with the mean label likelihood function and $x_1, x_2, \dots,$ be an infinite sequence of covariates independently drawn according to $\mu$. 
With probability 1, this sequence is dense in $X$, so by the observation above, with probability 1, the join of the partitions $\scP(x_1), \scP(x_2), \dots,$ is a refinement of $\scP_0$. We conclude with an application of Theorem~\ref{thm:covariates}.
\end{example}


\subsection{Discussion}
\label{subsec:applications}

We conclude this section with a discussion on various aspects of our framework.

\subsubsection{On Data Requirements for Complete Elicitation}

We have argued that the full information on the analyst's belief is generally not elicitable, and we also have given examples when it is.
The ability to elicit the full information depends on the relative size of the space of the analyst's beliefs and the space of the possible outcome distributions.

Suppose the principal collects independent observations from a categorical, identified experiment. We make two claims.

\emph{The first claim is that if the parameter space is infinite and the principal gets finitely many data points, it is never possible to elicit the complete information of the analyst (his full belief).} To elicit the full information, we need to have an infinite set of observations, and this also makes it possible to infer the true value of the parameter by statistical inference.
In terms of data requirements, complete elicitation is as demanding as perfect parameter estimation.

\emph{The second claim is that if, on the contrary, the parameter space is finite of size $n$, then the principal never needs more than $n-1$ observations to elicit the analyst's full information.}

We begin by proving the first claim.

The product experiment that generates the principal's data is categorical. Let $m$ be the size of its outcome set, which we enumerate implicitly.
Suppose the parameter space is finite of size $n$ with $n > m$, so that $\Theta = \{\theta_1, \dots, \theta_n\}$.
Of course, the argument carries over to all larger parameter spaces.
Consider the following linear operator from $\R^n$ to $\R^m$:
\begin{equation*}
L(p_1, \dots, p_n) = \sum_i p_i \pi(\cdot | \theta_i),
\end{equation*}
where a distribution over outcomes takes values in the $(m-1)$-simplex and so is identified with a vector of $\R^m$.
This operator transforms a parameter distribution into the mean outcome distribution, for the composite outcome generated by the product experiment.

The maximal information that the product experiment elicits is precisely the mean outcome distribution (Theorem~\ref{thm:baseline}). Thus, if we could elicit the full parameter distribution, then any pair of distinct beliefs over parameters would induce two different mean outcome distributions, which is impossible because the dimension of the domain of $L$ exceeds the dimension of the codomain.

To demonstrate the second claim, we note that since the experiment is both identified and categorical, there exists a one-to-one function $g \in \scG$ such that the mean of $g(\theta)$ is elicitable.
Let $\Theta = \{\theta_1, \dots, \theta_n\}$. 
By Theorem~\ref{thm:multiple}, with $n-1$ data points, we can elicit the mean of $g(\theta)^k$ for $k=1, \dots, n-1$.
Consider the following linear operator from $\R^n$ to $\R^n$:
\begin{equation*}
L(p_1, \dots, p_n) = \sum_i p_i \Matrix{1;g^1(\theta_i);\vdots;g^{n-1}(\theta_i)}.
\end{equation*}
This operator transforms a belief over parameters to the mean distribution of the vector $(1,g^1(\theta),\dots,g^{n-1}(\theta))$. The transformation is one-to-one (it is represented by a full rank Vandermonde matrix), and thus the analyst's belief over parameters is fully determined by the mean of each $g(\theta)^k$, so that the full belief is elicitable.

\subsubsection{On the Elicitation of Point Estimates}

Mean point estimates appear to play a special role in our framework.
Our first result states that one can always elicit the mean of the outcome distribution, and we have argued by example that the mean belief on some parameters can, sometimes, be elicited with just a few data points. It turns out the other two standard point estimates that are the mode and the median are difficult to elicit in that they demand more data for incentive provision---in fact, they demand as much data as for the elicitation of the full information.

At a general level, this fact is a consequence of the geometry of the maximal information elicited, $\scP^\star$. This partition takes the form of parallel, or translated, linear spaces. The information partition of the mean is also composed of parallel linear spaces, but not that of the mode or median. We illustrate the case of the mode below, the case of the median is similar. To simplify matters, we focus on a finite, real parameter space: $\Theta = \{\theta_1, \dots, \theta_n\} \subset \R$ with $\theta_1 < \cdots < \theta_n$. 

\emph{Take an arbitrary experiment. We claim that if the mode of $\theta$ is elicitable with this experiment, then the full parameter distribution is also elicitable.} 

By contradiction, suppose the experiment does not elicit the full parameter distribution but that it elicits the mode.
Suppose $p$ and $p'$ are two beliefs over $\theta$ are indistinguishable under $\scP^\star$:
\[
\sum_i p(\theta_i) \pi(\cdot | \theta_i)
=
\sum_i p'(\theta_i) \pi(\cdot | \theta_i).
\]
Let $v = p'- p$. Note that $\sum_i v(\theta_i)=0$. Then note that for all beliefs $q$ and $q'$ such that $q' = q + \alpha v$ for some $\alpha$, we also have
\[
\sum_i q(\theta_i) \pi(\cdot | \theta_i)
=
\sum_i q'(\theta_i) \pi(\cdot | \theta_i),
\]
so that $q$ and $q'$ are also indistinguishable under $\scP^\star$.

Thus, if the experiment elicits a mode, then $q$ and $q'$ must share a mode in common. Let $q$ be the uniform distribution, and let $I = \argmax_i v(\theta_i)$ and $J = \argmin_j v(\theta_j)$. Note that $I \cap J = \emptyset$.
If $\alpha$ is small enough, then $q +\alpha v \in \Delta(\Theta)$. In addition, if $\alpha > 0$ then $q + \alpha v$ has all modes in I, and if $\alpha < 0$ then $q + \alpha v$ has all modes in J. Hence a contradiction.


\section{Comparison of Experiments}
\label{sec:comparison of experiments}

So far we have focused on one fixed experiment.
When data take multiple forms, can be collected in different ways, or simply when various amounts of data can be gathered, the principal has a choice to make about which experiment to carry out.

A standard way to compare experiments is Blackwell's informativeness order.
An experiment $(Y,\pi_Y)$---that generates random outcome $y$---is more informative than an experiment $(Z,\pi_Z)$---that generates random outcome $z$---if we can write $z=h(y,\epsilon)$ for some function $h$ and some independent random noise $\epsilon$ (and where equality is in distribution); that is, the second experiment is a garbling of the first. Intuitively, we can emulate the data supplied by $(Z,\pi_Z)$ by transforming the data supplied by $(Y,\pi_Y)$, possibly with the addition of random noise.
Blackwell's comparison is relevant when data is used for statistical inference, because the statistician who cares to minimize expected losses will be better off with an experiment that is more informative in the sense of Blackwell.

Rather than estimate model parameters, we utilize experiments as incentive generator to elicit information on these parameters. This goal motivates two alternatives to compare experiment. Our first basis of comparison, explored in Section~\ref{subsec:Comparison based on Elicitability}, is the information elicitable with a given experiment.
Our second basis of comparison, detailed in Section~\ref{subsec:Comparison based on Incentives}, is the power of incentives that can be implemented. We will see that these two approaches are essentially equivalent. 

Throughout this section, the focus is on experiments that are categorical. 
By `experiment' we always mean categorical experiment.
We also assume that the outcomes associated with these experiment are enumerated, and leave the enumeration implicit. For any two (enumerated) sets of outcomes $Y$ and $Z$, a $Y$-by-$Z$ matrix $M$ stands for a matrix with $|Y|$ rows and $|Z|$ columns, $M(y,z)$ is the entry in row $y$ and column $z$. When $M$ is a Markov matrix---a matrix with nonnegative entries whose rows sum to 1---we also use the familiar notation $M(z | y)$, interpreted as the transition probability from $y$ to $z$.


\subsection{Comparison based on Elicitability}
\label{subsec:Comparison based on Elicitability}

Let us say that \df{an experiment dominates another experiment in the sense of elicitation} if every information partition elicitable with the latter is also elicitable with the former. 
This domination relation defines an `elicitation' order on experiments, just like Blackwell's informativeness order.\footnote{Strictly speaking, these are quasi-orders.} 
Although Blackwell's informativeness order and the elicitation order serve very different purposes, they are closely related.

To see this, consider two experiments $(Y,\pi_Y)$ and $(Z,\pi_Z)$. 
Recall that the Blackwell order can be restated as follows: $(Y,\pi_Y)$ is more informative than $(Z,\pi_Z)$ if, and only if, there exists a $Y$-by-$Z$ Markov matrix $M$ such that
\begin{equation}
\label{eq:dominance-Blackwell}
\pi_Z(z |\theta) = \sum_{y \in Y} M(z|y) \pi_Y(y|\theta) \qquad  \forall z \in Z, \theta \in \Theta.
\end{equation}

On the other hand, $(Y,\pi_Y)$ dominates $(Z,\pi_Z)$ in the sense of elicitation if, and only if, 
there exists a $Y$-by-$Z$ matrix $M$ (which need not be Markovian) such that
\begin{equation}
\label{eq:dominance-elicitation}
\pi_Z(z |\theta) = \sum_{y \in Y} M(y,z) \pi_Y(y|\theta) \qquad  \forall z \in Z, \theta \in \Theta.
\end{equation}
Indeed, the mean distribution of the outcomes generated by $(Z,\pi_Z)$ is elicitable with $(Z,\pi_Z)$ (Theorem~\ref{thm:baseline}), and so it is also elicitable with $(Y,\pi_Y)$ if $(Y,\pi_Y)$ dominates $(Z,\pi_Z)$.
Corollary~\ref{cor:estimator-converse} then yields the existence of a $Y$-by-$Z$ matrix $M$ that satisfies Equation~\eqref{eq:dominance-elicitation}. Of course, if Equation~\eqref{eq:dominance-elicitation} holds for some matrix $M$, then $(Y,\pi_Y)$ dominates $(Z,\pi_Z)$ in the sense of elicitation. 

In the sequel we abuse notation slightly and summarize Equations~\eqref{eq:dominance-Blackwell} and~\eqref{eq:dominance-elicitation} by $\pi_Z = \pi_Y M$, even though $\pi_Z$ and $\pi_Y$ are not matrices when the parameter space is infinite.
The next lemma summarizes the discussion above.
\begin{lemma}
The experiment $(Y,\pi_Y)$ dominates the experiment $(Z,\pi_Z)$ in the sense of elicitation if, and only if, $\pi_Z = \pi_Y M$ for some $Y$-by-$Z$ matrix $M$.
\end{lemma}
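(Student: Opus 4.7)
The plan is to prove both directions by applying Theorem~\ref{thm:baseline}, Corollary~\ref{cor:estimator}, and Corollary~\ref{cor:estimator-converse}, which together reduce the elicitation order to a question about unbiased linear representations of coordinate functions.

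For the ``only if'' direction, I would first note that for each fixed $z \in Z$ the function $g_z(\theta) \eqdef \pi_Z(z \mid \theta)$ lies in $\scG$ (it is measurable and bounded by $1$), and that its mean under any belief $p$ is precisely the probability the outcome distribution $\lambda_p^Z$ assigns to $z$. By Theorem~\ref{thm:baseline} applied to $(Z,\pi_Z)$, the means of all such $g_z$ are elicitable with $(Z,\pi_Z)$ simultaneously. Hence, if $(Y,\pi_Y)$ dominates $(Z,\pi_Z)$ in the sense of elicitation, each of these means is also elicitable with $(Y,\pi_Y)$. Because $(Y,\pi_Y)$ is categorical, Corollary~\ref{cor:estimator-converse} produces, for each $z$, a function $w_z : Y \to \R$ such that
\[
\pi_Z(z \mid \theta) = \sum_{y \in Y} w_z(y) \, \pi_Y(y \mid \theta) \qquad \forall \theta \in \Theta.
\]
Defining $M(y,z) \eqdef w_z(y)$ gives the required $Y$-by-$Z$ matrix with $\pi_Z = \pi_Y M$.

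For the ``if'' direction, suppose $\pi_Z = \pi_Y M$ for some $Y$-by-$Z$ matrix $M$. Setting $w_z(y) \eqdef M(y,z)$ for each $z$, Corollary~\ref{cor:estimator} applied to $g_z(\theta) = \pi_Z(z \mid \theta)$ shows that the mean of $\pi_Z(z \mid \theta)$ is elicitable with $(Y,\pi_Y)$. By Theorem~\ref{thm:baseline}, these means are elicitable all at once with a single mechanism for $(Y,\pi_Y)$, so the map $p \mapsto \lambda_p^Z$ can be recovered from the maximal information elicitable with $(Y,\pi_Y)$. Equivalently, $\scP^\star_Y$ is finer than $\scP^\star_Z$, where $\scP^\star_Y$ and $\scP^\star_Z$ denote the maximal information elicitable with $(Y,\pi_Y)$ and $(Z,\pi_Z)$ respectively. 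Now take any $\scP$ elicitable with $(Z,\pi_Z)$: by Theorem~\ref{thm:baseline}, $\scP$ is coarser than $\scP^\star_Z$, hence coarser than $\scP^\star_Y$, hence elicitable with $(Y,\pi_Y)$.

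There is no real obstacle here since the pieces have been assembled in the preceding discussion; the only care needed is to verify that the $g_z$'s simultaneously satisfy the hypotheses of the corollaries (boundedness and, in the categorical case, the linear representation) and to invoke the ``all at once'' consequence of Theorem~\ref{thm:baseline} so that the finitely or countably many mean elicitations are implemented by a \emph{single} mechanism rather than one per coordinate $z$.
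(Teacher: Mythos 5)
Your proposal is correct and follows essentially the same route as the paper: the ``only if'' direction elicits the mean outcome distribution of $(Z,\pi_Z)$ via Theorem~\ref{thm:baseline}, transfers it to $(Y,\pi_Y)$ by dominance, and extracts the matrix $M$ from Corollary~\ref{cor:estimator-converse}, while the ``if'' direction uses Corollary~\ref{cor:estimator} to show $\scP^\star_Y$ refines $\scP^\star_Z$ and then applies the characterization of Theorem~\ref{thm:baseline}. The paper presents this argument more tersely in the discussion preceding the lemma, but the logic is identical.
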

Remarkably, the only difference between Blackwell dominance and dominance by elicitation is that $M$ is a Markov matrix in the former, while it is a general matrix in the latter. As an immediate consequence, any experiment that dominates another in the sense of Blackwell also dominates in the sense of elicitation, but not the opposite: The elicitation order is ``more complete'' than the informativeness order.

To illustrate the difference, let us go back to the Bernoulli model evoked in Example~\ref{ex:bernoulli model}.
Let $(\{0,1\},\pi)$ be the experiment that corresponds to a single clinical trial, as in the original example:  $\pi(1 | \theta) = \theta$, $\pi(0|\theta)=1-\theta$.
In addition, let $(\{0,1\},\pi')$ be the experiment that adds noise to the clinical trial: $\pi'(1 | \theta) = .05 + .9 \theta$, $\pi'(0 | \theta) = .95 - .9 \theta$. In words, with 90\% chance, the outcome the experiment generates is the true result of the clinical trial, and with 10\% chance, the outcome generated is 0 or 1 with equal probability and is thus entirely uninformative.\footnote{The observation continues to be the final outcome. It does not include information on whether the outcome comes from the clinical trial or the roulette lottery.}
Clearly, $(\{0,1\},\pi)$ dominates $(\{0,1\},\pi')$ in the sense of Blackwell's informativeness, and thus also in the sense of elicitation. The corresponding Markov matrix is
\[
M = \Matrix{.95,.05;.05,.95}.
\]
Observe that $(\{0,1\},\pi')$ dominates $(\{0,1\},\pi)$ in the sense of elicitation, because $\pi = \pi' M^{-1}$, while it is not more informative than $(\{0,1\},\pi)$---and indeed it is easily verified that $M^{-1}$ is not Markov.

This fact turns out to be quite general: The main difference between elicitation and Blackwell informativeness is that, when data is used in contingent payments as incentive generator, certain types of noise in the data do not impact the ability to offer incentives, whereas in a context of estimation, the statistician cares to avoid every type of noise. Of course, the incentive designer is not immune to all noises. In the example above, no information can be elicited with the experiment that, with 50\% chance, reveals the true result of the clinical trial, and with the complementary probability reveals the opposite of the true result.

To formalize this idea, we introduce the concept of uniform garbling. An experiment $(Y,\pi')$ is a \df{uniform garbling} of another experiment $(Y,\pi)$ that shares the same outcome space when, for some $\epsilon \in [0,1)$, $(Y,\pi')$ reveals the same outcome\footnote{Strictly speaking, outcomes are always assumed to be conditionally independent across experiments, so by `same' outcome we mean an outcome with an identical distribution conditionally on the parameter.} as  $(Y,\pi)$ with probability $1-\epsilon$, and reveals an outcome drawn uniformly from $Y$ with the complementary probability: $\pi'(y|\theta) = \epsilon/|Y| + (1-\epsilon) \pi(y|\theta)$.
We then prove the following result:
\begin{proposition}
	\label{prop:comparison}
	An experiment $(Y,\pi_Y)$ dominates another experiment $(Z,\pi_Z)$ in the sense of elicitation if, and only if, $(Y,\pi_Y)$ is more informative than a uniform garbling of $(Z,\pi_Z)$.
\end{proposition}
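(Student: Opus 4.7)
The plan is to use the factorization characterization from the lemma immediately preceding the proposition: $(Y,\pi_Y)$ dominates $(Z,\pi_Z)$ in the sense of elicitation iff $\pi_Z = \pi_Y M$ for some $Y$-by-$Z$ matrix $M$ (signed entries allowed), whereas Blackwell dominance of a uniform garbling $\pi_{Z'} = \epsilon U + (1-\epsilon)\pi_Z$ (with $U$ the matrix whose every entry is $1/|Z|$) asks for a Markov $N$ with $\pi_{Z'} = \pi_Y N$. The proposition therefore reduces to showing that a signed factorization can be converted to a stochastic one after mixing $\pi_Z$ with enough uniform noise on $Z$, and vice versa. The key auxiliary observation is that the ``uniform'' Markov matrix $N_0$ with every entry $1/|Z|$ satisfies $\pi_Y N_0 = U$ for any $\pi_Y$, because each row of $\pi_Y$ sums to $1$.

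For the easy direction, suppose $(Y,\pi_Y)$ is Blackwell more informative than some uniform garbling $(Z,\pi_{Z'})$ with parameter $\epsilon \in [0,1)$, witnessed by a Markov $N$ with $\pi_Y N = \pi_{Z'} = \epsilon U + (1-\epsilon)\pi_Z$. Using $\pi_Y N_0 = U$, I would subtract and divide to obtain $\pi_Y \bigl[(N - \epsilon N_0)/(1-\epsilon)\bigr] = \pi_Z$. Setting $M = (N-\epsilon N_0)/(1-\epsilon)$ (a valid matrix since $\epsilon<1$, though possibly with negative entries) gives the factorization required by the lemma, so $(Y,\pi_Y)$ dominates $(Z,\pi_Z)$ in the sense of elicitation.

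For the converse, start from $M$ with $\pi_Z = \pi_Y M$. The natural candidate is $N_\epsilon := (1-\epsilon) M + \epsilon N_0$, which automatically yields $\pi_Y N_\epsilon = (1-\epsilon)\pi_Z + \epsilon U$, the desired uniform garbling. Non-negativity of the entries of $N_\epsilon$ follows for any $\epsilon$ large enough to dominate $(1-\epsilon)|\min_{y,z} M(y,z)|$, and such $\epsilon$ may be chosen strictly below $1$. The genuine obstacle here is row-sum normalization: $N_\epsilon$ is stochastic only if $M$ itself has row sums equal to $1$, which is not guaranteed. To repair this, I would use a kernel correction. Writing $s = \mathbf{1}_Y - M\mathbf{1}_Z$, one checks $\pi_Y s = \mathbf{1}_\Theta - \pi_Y M\mathbf{1}_Z = \mathbf{1}_\Theta - \pi_Z \mathbf{1}_Z = 0$, so $s \in \ker \pi_Y$. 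Then for any probability vector $v$ on $Z$, the modified matrix $M' := M + s v^\top$ still satisfies $\pi_Y M' = \pi_Y M = \pi_Z$, and now $M'\mathbf{1}_Z = M\mathbf{1}_Z + s = \mathbf{1}_Y$. Applying the construction to $M'$ rather than $M$ makes $N_\epsilon$ genuinely Markov, establishing Blackwell dominance of $(Z,\pi_{Z'})$.

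The main obstacle is exactly this row-sum normalization in the forward direction: without the kernel trick, a general signed witness of elicitation dominance does not correspond to any stochastic post-processing, and the most uniform mixing could do is fix non-negativity. Once one recognizes that $\mathbf{1}_Y - M\mathbf{1}_Z$ lies in $\ker \pi_Y$, both the non-negativity and the stochasticity constraints can be satisfied simultaneously by the single parameter $\epsilon$, completing the equivalence.
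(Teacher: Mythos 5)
Your proof is correct and follows essentially the same route as the paper: the forward direction is the paper's Lemma~\ref{lem:yaron} (renormalizing the row sums of $M$ by adding a multiple of the kernel vector $\mathbf{1}_Y - M\mathbf{1}_Z$) followed by mixing with the uniform matrix to gain nonnegativity, and the converse is the same computation the paper performs by inverting the garbling matrix $(1-\epsilon)I + \tfrac{\epsilon}{|Z|}J$. Your explicit correction $M' = M + s v^{\top}$ with $v$ a probability vector is in fact a slightly cleaner statement of the paper's normalization step.
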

Importantly, this result states that the elicitation order is simply the transitive closure of Blackwell's informativeness order and the order induced by uniform garblings. We may view the elicitation order as refining the Blackwell order with a lot of indifference.
The proof of Proposition~\ref{prop:comparison} is in Appendix~\ref{appx:Proofs of Comparison of experiments}.

The uniform distribution is enough to obtain the result, but not at all essential.
For example, fix an arbitrary distribution $\mu$ on every outcome space, and look at $\mu$-garblings instead, where $(Y,\pi')$ is a $\mu$-garbling of $(Y,\pi)$ if $\pi'$ is as in the case of a uniform garbling except that, instead of the uniform draw, the outcome is drawn according to $\mu$. Then, $\pi' = \pi M$ with $M$ the Markov matrix defined by
\[
M(z|y) =
\begin{cases}
1-\epsilon & \text{if } y=z, \\
\epsilon \mu(z) & \text{otherwise,}
\end{cases}
\]
and $M$ is invertible.\footnote{Observe that $M = (1-\epsilon) I + \epsilon P$ where $I$ is the identity matrix, and $P$ is a square Markov matrix whose rows are identical. Since $P$ is idempotent, it is easily verified that the inverse is $(1-\epsilon)^{-1} (I - \epsilon P)$.} 
Hence, $(Y,\pi)$ and $(Y,\pi')$ elicit the same information, and it can be verified that the arguments of Proposition~\ref{prop:comparison} continue to hold when $\mu$-garblings are used in place of uniform garblings.
\footnote{Also note that uniform garblings (or $\mu$-garblings) are not the only types of noisy transformations the principal is immune to, since any experiment $(Y,\pi')$ that is a garbling of $(Y,\pi)$, and so satisfies $\pi'=\pi M$ for a Markov matrix $M$, elicits the same information as $(Y,\pi)$ as long as $M$ has full rank.}


\subsection{Comparison based on the Power of Incentives}
\label{subsec:Comparison based on Incentives}

The order just defined does not deal with the power of incentives. We may wish that, in addition to the ability to elicit more information, the experiment also enables us to keep the same incentives; that is, we may wish to make comparisons based on incentives rather than just the information elicited.
To make this comparison, let us say that a (general) mechanism $\phi:\scR \times Y \to \R$ for an experiment $(Y,\pi_Y)$ is \df{payoff-equivalent} to another mechanism $\psi:\scR \times Z \to \R$ for an experiment $(Z,\pi_Z)$ if, for all beliefs $p \in \Delta(\Theta)$, and all reports $r \in \scR$, we have $\E_p[\phi(r,y)] = \E_p[\psi(r,z)]$.

An `incentive' order can then be defined, according to which an experiment $(Y,\pi_Y)$ dominates another, $(Z,\pi_Z)$, if the incentives that can be implemented using $(Z,\pi_Z)$ can also be implemented using $(Y,\pi_Y)$. Specifically, to any mechanism for $(Z,\pi_Z)$, there corresponds a payoff-equivalent mechanism for $(Y,\pi_Y)$.
This comparison demands that not only we can elicit the same information with the dominating experiment, but we can do so with the same incentives.
It turns out that this incentive order coincides with the elicitation order.

Indeed, recall that if $(Y,\pi_Y)$ dominates $(Z,\pi_Z)$ in the sense of elicitation, then there exists a $Y$-by-$Z$ matrix $M$ with $\pi_Z = \pi_Y M$.
Consider a mechanism $\psi$ for $(Z,\pi_Z)$ and construct a mechanism $\phi$ for $(Y,\pi_Y)$ by
\begin{equation}
\label{eq:construction-M}
\phi(r,y) = \sum_z \psi(r,z) M(y,z).
\end{equation}
For every belief over parameters, $p \in \Delta(\Theta)$,
\begin{align*}
\E_p[\phi(r,y)]
&= \int_\Theta \sum_y \sum_z \psi(r,z) \cdot M(y,z) \cdot \pi_Y(y|\theta) \cdot p(\de\theta) \\
&= \int_\Theta \sum_z \psi(r,z) \cdot \pi_Z(z|\theta) \cdot p(\de\theta)\\
&= \E_p[\psi(r,z)],
\end{align*}
hence, $\phi$ and $\psi$ are payoff-equivalent, which proves the next proposition:
\begin{proposition}
\label{prop:power-general}
If $(Y,\pi_Y)$ dominates $(Z,\pi_Z)$ in the sense of elicitation, then given any mechanism for $(Z,\pi_Z)$, there exists a payoff-equivalent mechanism for $(Y,\pi_Y)$.
\end{proposition}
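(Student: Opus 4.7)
The plan is to unwind the equivalence in Equation~\eqref{eq:dominance-elicitation} and directly push a mechanism on $Z$ through the matrix $M$. The setup is categorical, so all sums over $Y$ and $Z$ are finite and nothing about measurability or integrability stands in the way; the only substantive tool needed is the lemma just before the proposition.

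First, I would invoke the lemma: since $(Y,\pi_Y)$ dominates $(Z,\pi_Z)$ in the sense of elicitation, fix a $Y$-by-$Z$ matrix $M$ with $\pi_Z(z\mid\theta) = \sum_{y\in Y} M(y,z)\,\pi_Y(y\mid\theta)$ for every $z$ and $\theta$. Note $M$ need not be Markov, but that is immaterial: the construction of the new mechanism uses $M$ only as a bookkeeping device for transferring payoffs from $Z$ to $Y$, and the payoffs themselves may be any real numbers.

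Next, given an arbitrary mechanism $\psi:\scR\times Z\to\R$ for $(Z,\pi_Z)$, define
\[
\phi(r,y) \eqdef \sum_{z\in Z} \psi(r,z)\, M(y,z), \qquad r\in\scR,\ y\in Y.
\]
This is a valid mechanism for $(Y,\pi_Y)$. To verify payoff equivalence, fix any belief $p\in\Delta(\Theta)$ and any report $r\in\scR$, and compute
\begin{align*}
\E_p[\phi(r,y)]
&= \int_\Theta \sum_{y\in Y} \phi(r,y)\,\pi_Y(y\mid\theta)\, p(\de\theta) \\
&= \int_\Theta \sum_{y\in Y}\sum_{z\in Z} \psi(r,z)\,M(y,z)\,\pi_Y(y\mid\theta)\, p(\de\theta) \\
&= \int_\Theta \sum_{z\in Z} \psi(r,z)\,\pi_Z(z\mid\theta)\, p(\de\theta)
= \E_p[\psi(r,z)],
\end{align*}
where the swap of the two finite sums is trivial and the third equality uses the defining identity for $M$.

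There is essentially no obstacle here; the only minor point to keep in mind is that $\phi$ may take values outside any nonnegativity or boundedness range that $\psi$ satisfies, because entries of $M$ can be negative or larger than one. This is precisely the caveat foreshadowed in the introduction (``it may be necessary to expand the range of possible payments''), and it does not affect the present statement, which imposes no such constraints on admissible mechanisms.
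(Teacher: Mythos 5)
Your proposal is correct and follows exactly the paper's own argument: invoke the domination characterization to obtain a $Y$-by-$Z$ matrix $M$ with $\pi_Z = \pi_Y M$, define $\phi(r,y) = \sum_z \psi(r,z) M(y,z)$, and verify payoff equivalence by interchanging the finite sums. The closing remark about the possible expansion of the payoff range matches the paper's discussion following the proposition.
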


In particular, whenever an experiment strictly dominates another in the sense of elicitation, a strictly richer set of incentives can be implemented. However, if we normalize the payoffs of the truthful analyst, it is not possible to use this flexibility to augment the power of incentives by a stronger punishment of deviations.
Returning to Example~\ref{ex:bernoulli model}, with a single clinical trial we elicit the mean parameter value with the incentive compatible mechanism $\phi(p,y) = 2 \mu y - \mu^2$ where $\mu = \E_p[\theta]$ is the mean parameter according to the reported parameter distribution $p$.
The analyst who is offered this mechanism and reports his belief $p$ truthfully earns on average $\E_p[\theta]^2$, and the cost of deviating to some other report $q$ is $(\E_p[\theta] - \E_q[\theta])^2$. 
If instead we observe the outcomes of 10 clinical trials and continue to elicit the mean parameter, we cannot leverage the additional data to generate a higher cost of deviation if we insist on paying on average $\E_p[\theta]^2$ to the truthful analyst.
This observation is formalized in the proposition below. We call a direct mechanism $\phi$ continuous if $p \mapsto \phi(p,y)$ is continuous in the total variation distance.
\begin{proposition}
\label{prop:power-valuefunction}
Consider two continuous direct incentive compatible mechanisms, $\phi$ and $\psi$ (either for one same experiment or for two different experiments).
If $\phi$ and $\psi$ generate the same expected payoffs to the truthful analyst, then the expected payoff of an analyst who believes $p$ and reports $q$ is identical under both mechanisms.
\end{proposition}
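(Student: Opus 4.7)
The plan is to show that any two affine minorants of the common truth-telling value function
\[
	V(p) := \E_p[\phi(p,y)] = \E_p[\psi(p,z)]
\]
that agree with $V$ at the same point $r$ must coincide, by reducing the problem to a one-dimensional convex function and invoking its (almost everywhere) differentiability. First I would observe that for any fixed report $r$, the map $p \mapsto \E_p[\phi(r,y)] = \int_\Theta \int_Y \phi(r,y)\,\pi(dy|\theta)\,p(d\theta)$ is affine in $p$, and by incentive compatibility $\E_p[\phi(r,y)] \le V(p)$ with equality at $p = r$; the analogous facts hold for $\psi$. Consequently $V$ is the supremum over $r$ of these affine functions, so $V$ is convex.

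Next, fix arbitrary $r, r' \in \Delta(\Theta)$; the goal is to prove that $\E_{r'}[\phi(r,y)] = \E_{r'}[\psi(r,z)]$. Consider the segment $p_t := (1-t)r + t r'$ for $t \in [0,1]$ and set $F(t) := V(p_t)$, which is convex on $[0,1]$. For each $t_0 \in (0,1)$, the maps $t \mapsto \E_{p_t}[\phi(p_{t_0},y)]$ and $t \mapsto \E_{p_t}[\psi(p_{t_0},z)]$ are affine in $t$, attain $F(t_0)$ at $t=t_0$, and are bounded above by $F(t)$ throughout $[0,1]$ by incentive compatibility. They are therefore supporting affine functions of the convex function $F$ at $t_0$. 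Since a convex function on a bounded interval is differentiable off a countable set $N \subset (0,1)$, at each $t_0 \in (0,1) \setminus N$ the supporting line at $t_0$ is uniquely determined by its value and slope, so the two affine functions coincide on $[0,1]$. Evaluating at $t=1$ gives
\[
	\E_{r'}[\phi(p_{t_0},y)] = \E_{r'}[\psi(p_{t_0},z)] \qquad \text{for all } t_0 \in (0,1)\setminus N.
\]

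Finally, I would let $t_0 \to 0^+$ along the dense set $(0,1)\setminus N$. Since $\lVert p_{t_0} - r\rVert_{TV} = t_0 \lVert r' - r\rVert_{TV} \to 0$, the continuity hypothesis yields $\phi(p_{t_0},y) \to \phi(r,y)$ for every $y$, and likewise for $\psi$. Passing the limit through the expectation gives $\E_{r'}[\phi(r,y)] = \E_{r'}[\psi(r,z)]$; since $r, r'$ were arbitrary this is the conclusion.

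The main obstacle is the last step: the continuity assumption is pointwise in $y$, so the interchange of limit and integral requires a dominating integrable function. This can be supplied by the implicit integrability of the mechanisms (expected payoffs are assumed to exist throughout) together with the local boundedness that continuity in $p$ provides along the compact segment $\{p_{t_0}: t_0 \in [0,1]\}$, after which dominated convergence applies. All the other ingredients—affinity of expected payoffs in the belief, convexity of a sup of affine functions, and uniqueness of supporting lines at differentiability points of a univariate convex function—are standard.
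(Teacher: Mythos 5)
Your proof is correct, and while it rests on the same underlying idea as the paper's---restricting to the segment between the report and the belief and exploiting the fact that incentive compatibility makes $t\mapsto\E_{p_t}[\phi(r,y)]$ an affine function lying weakly below the value function and touching it at the truthful point---the execution is genuinely different. The paper invokes the Milgrom--Segal envelope theorem to assert that $\alpha\mapsto V_\phi(\alpha p+(1-\alpha)q)$ is differentiable with derivative $\E_p[\phi(p_\alpha,y)]-\E_q[\phi(p_\alpha,y)]$, uses continuity of $\phi$ to extend this derivative continuously to the endpoint, and recovers $\E_p[\phi(q,y)]=V_\phi(q)+V'(0)$, a quantity determined by the common value function alone. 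You instead prove from scratch that the value function is convex as a supremum of affine functions, use the classical fact that a univariate convex function is differentiable off a countable set, deduce uniqueness of the supporting line (and hence equality of the two off-diagonal expected payoffs at $t=1$) at every differentiability point, and handle the endpoint $t_0=0$ by a continuity limit rather than by differentiating there. What your route buys is a self-contained, elementary argument that is arguably more careful about one-sided differentiability at the endpoint, which the paper's appeal to a continuous $V'$ glosses over; what it costs is the extra limiting step. The dominated-convergence worry you flag in that last step is moot in context: Section~\ref{sec:comparison of experiments} restricts attention to categorical experiments, so the expectation over outcomes is a finite sum of the form $\sum_y\phi(p_{t_0},y)\lambda_{r'}(y)$, and pointwise convergence of $\phi(p_{t_0},\cdot)$ to $\phi(r,\cdot)$ suffices without any domination argument.
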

This result owes to the Envelope Theorem. A short proof is in Appendix~\ref{appx:Proofs of Comparison of experiments}.

While it is always possible to maintain expected payoffs with a dominating experiment, the range of the possible payments may need to be enlarged. Intuitively, it may be necessary to compensate for the presence of additional noise in the data that the dominating experiment generates.


In particular, if the principal has a limited liability constraint, so that the payments to the agent have to be nonnegative, then domination in the sense of elicitation does not imply in general that expected payments can be maintained. This can be easily seen in our example of Section~\ref{subsec:Comparison based on Elicitability}: $(\{0,1\},\pi')$ dominates $(\{0,1\},\pi)$ in the sense of elicitation, but consider a mechanism $\psi$ for the $\pi$ experiment in which one of the reports $r$ is such that $\psi(r,1)=1$ and $\psi(r,0)=0$. For any belief $p$ we have that $\E_p[\psi(r,z)]=\E_p[\theta]$, so in particular $\E_p[\psi(r,z)]=0$ if $p$ assigns probability 1 to state $\theta=0$. If a $\pi'$-mechanism $\phi$ is to be payoff-equivalent to $\psi$, then for this belief $p$ we must have $0=\E_p[\phi(r,y)] = .05\phi(r,1)+.95\phi(r,0)$, which by nonnegativity implies that $\phi(r,1)=\phi(r,0)=0$. But then $\E_p[\phi(r,y)] \neq \E_p[\psi(r,z)]$ for any other belief $p$.  

In our next result we characterize when it is possible to maintain expected payments, and hence incentives, when limited liability is required.

\begin{proposition}
\label{prop:power-nonnegative}
Let $(Y,\pi_Y)$ and $(Z,\pi_Z)$ be two experiments. Then the following two statements are equivalent:\\
(i) Given any mechanism for $(Z,\pi_Z)$ with nonnegative payoffs there exists a payoff-equivalent mechanism for $(Y,\pi_Y)$ also with nonnegative payoffs.\\
(ii) There exists a nonnegative $Y$-by-$Z$ matrix $M$ such that $\pi_Z = \pi_Y M$.
\end{proposition}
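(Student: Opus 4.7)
The plan is to mimic the proof of Proposition~\ref{prop:power-general}, preserving nonnegativity in one direction and inverting the construction in the other.

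For (ii) $\Rightarrow$ (i), I would take a nonnegative mechanism $\psi:\scR \times Z \to \R_+$ together with a nonnegative $Y$-by-$Z$ matrix $M$ satisfying $\pi_Z = \pi_Y M$, and reuse Equation~\eqref{eq:construction-M} to define $\phi(r,y) = \sum_z \psi(r,z)\, M(y,z)$. Nonnegativity of $\phi$ is immediate because both $\psi$ and $M$ are nonnegative, and the payoff-equivalence computation already carried out in the proof of Proposition~\ref{prop:power-general} goes through verbatim.

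For the converse (i) $\Rightarrow$ (ii), the idea is to probe hypothesis (i) with nonnegative indicator mechanisms in order to read off the columns of the desired matrix. Concretely, take the report space to be $\scR = Z$ and set $\psi(z',z) = \1\{z = z'\}$, which is nonnegative. By (i), there exists a nonnegative payoff-equivalent mechanism $\phi:Z \times Y \to \R_+$; define $M(y,z) \eqdef \phi(z,y)$, which is manifestly a nonnegative $Y$-by-$Z$ matrix, so only the identity $\pi_Z = \pi_Y M$ remains to be checked. Payoff-equivalence at report $z$ and belief $p$ reads
\[
\int_\Theta \Big[\sum_y M(y,z)\,\pi_Y(y|\theta) - \pi_Z(z|\theta)\Big]\, p(\de\theta) = 0 \qquad \forall p \in \Delta(\Theta).
\]
Specializing to Dirac measures $p = \delta_\theta$---which belong to $\Delta(\Theta)$ because $\Theta$ is a standard Borel space---yields the pointwise identity $\sum_y M(y,z)\,\pi_Y(y|\theta) = \pi_Z(z|\theta)$ for every $\theta$ and $z$, which is exactly $\pi_Z = \pi_Y M$.

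No step here is a genuine obstacle. The one design choice worth flagging is the test mechanism: $\1\{z = z'\}$ is the minimal nonnegative payoff structure that isolates a single column of $M$, and any other separating family of nonnegative mechanisms would serve the same purpose. Beyond that, the proposition is a nonnegativity-respecting refinement of Proposition~\ref{prop:power-general}, and its proof simply tracks nonnegativity through the linear algebra already used there.
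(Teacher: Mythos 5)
Your proof is correct and takes essentially the same route as the paper: the forward direction reuses the construction in Equation~\eqref{eq:construction-M} and notes nonnegativity is preserved, and the converse probes condition (i) with the indicator mechanism $\psi(z',z)=\1\{z=z'\}$ and reads off $M(y,z)=\phi(z,y)$, exactly as the paper does. The only (harmless) addition is your explicit specialization to Dirac beliefs, which the paper leaves implicit in the phrase ``for every state $\theta$.''
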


Notice how condition (ii) of the proposition, that $\pi_Z = \pi_Y M$ for some nonnegative $M$, is situated between the conditions that characterize the elicitation order defined in Section~\ref{subsec:Comparison based on Elicitability} and Blackwell's informativeness order.\footnote{The condition $\pi_Z = \pi_Y M$ for some nonnegative $M$ also appears in the work of \citet{LehrerShmaya2008}. There, it is used to characterize the case where $\pi_Y$ gives positive expected payoff whenever $\pi_Z$ does in the presence of an outside option.} In the former $M$ can be any matrix, while in the latter it needs to be a Markov matrix. We have already demonstrated with the above example that condition (ii) is strictly more demanding than the former; we now give an example showing that it is strictly less demanding than the latter.

Let $\Theta=\{\theta_1,\theta_2,\theta_3\}$, $Y=\{1,2,3,4\}$, $Z=\{1,2,3\}$.
Consider the following Markov kernels in their matrix representations:
\[
\pi_Y = \Matrix{.5,0,0,.5;0,.5,0,.5;0,0,.5,.5}
\qquad \text{and} \qquad
\pi_Z = \Matrix{.5,.5,0;.5,0,.5;0,.5,.5}.
\]
Observe that $(Y,\pi_Y)$ is not more informative than $(Z,\pi_Z)$, yet $\pi_Z = \pi_Y M$ with
\[
M = \Matrix{1,1,0;1,0,1;0,1,1;0,0,0}.
\]

However, when the dominating experiment is complete---as defined in Section~\ref{sec:model}---the limited liability constraints become stronger and the incentive order reduces precisely to Blackwell's informativeness order. This is the content of the next corollary.

\begin{corollary}
\label{coro:power-blackwell-1}
Let $(Y,\pi_Y)$ and $(Z,\pi_Z)$ be two experiments, and suppose that $(Y,\pi_Y)$ is complete. Then the following two statements are equivalent:\\
(i) Given any mechanism for $(Z,\pi_Z)$ with nonnegative payoffs there exists a payoff-equivalent mechanism for $(Y,\pi_Y)$ also with nonnegative payoffs.\\
(ii) $(Y,\pi_Y)$ is more informative than $(Z,\pi_Z)$.
\end{corollary}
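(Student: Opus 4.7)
The plan is to derive the corollary as a direct refinement of Proposition~\ref{prop:power-nonnegative}, exploiting completeness to upgrade the nonnegativity condition to the Markov condition that characterizes Blackwell's informativeness order. The direction (ii)$\Rightarrow$(i) is immediate: if $(Y,\pi_Y)$ is more informative than $(Z,\pi_Z)$ then $\pi_Z = \pi_Y M$ for some Markov matrix $M$, and a Markov matrix is nonnegative, so (i) follows from Proposition~\ref{prop:power-nonnegative}. The work lies entirely in the converse.

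For (i)$\Rightarrow$(ii), Proposition~\ref{prop:power-nonnegative} supplies a nonnegative $Y$-by-$Z$ matrix $M$ with $\pi_Z = \pi_Y M$. It suffices to show that the rows of $M$ sum to $1$, that is, that $r(y) \eqdef \sum_z M(y,z)$ equals $1$ for every $y \in Y$. Summing the identity $\pi_Z(z|\theta) = \sum_y M(y,z) \pi_Y(y|\theta)$ over $z \in Z$ and using that $\pi_Z(\cdot|\theta)$ is a probability distribution yields
\begin{equation*}
	\sum_{y \in Y} r(y)\, \pi_Y(y|\theta) = 1 \qquad \forall \theta \in \Theta.
\end{equation*}

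The final step is where completeness of $(Y,\pi_Y)$ enters decisively. Integrating the displayed equation against any belief $p \in \Delta(\Theta)$ gives $\sum_y r(y)\, \lambda_p(y) = 1$ for every mean outcome distribution $\lambda_p$ that can be induced by a belief over parameters. Completeness says that every $\mu \in \Delta(Y)$ is of this form; in particular, taking $\mu$ to be the point mass at an arbitrary $y \in Y$ (which lies in $\Delta(Y)$ since $Y$ is finite) forces $r(y) = 1$. Combined with nonnegativity of the entries of $M$, this makes $M$ a Markov matrix, so $(Y,\pi_Y)$ dominates $(Z,\pi_Z)$ in Blackwell's sense.

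The only genuinely substantive step is the use of completeness to pin down each row sum separately; without it the constraint $\sum_y r(y)\pi_Y(y|\theta) = 1$ would admit many solutions $r \not\equiv 1$ whenever the kernels $\pi_Y(\cdot|\theta)$ span a proper subspace of $\R^Y$. The main potential obstacle is merely verifying that point masses on $Y$ indeed lie in $\Delta(Y)$ and are therefore covered by the completeness hypothesis, which is automatic in the categorical setting assumed throughout Section~\ref{sec:comparison of experiments}.
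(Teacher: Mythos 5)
Your proof is correct, but it reaches (i)$\Rightarrow$(ii) by a genuinely different route than the paper. You take the nonnegative matrix $M$ with $\pi_Z = \pi_Y M$ handed to you by Proposition~\ref{prop:power-nonnegative} and then use completeness to force the row sums to equal $1$: summing over $z$ gives $\sum_y r(y)\pi_Y(y|\theta)=1$, integrating against beliefs gives $\sum_y r(y)\lambda_p(y)=1$, and since completeness lets $\lambda_p$ range over all of $\Delta(Y)$ --- in particular over point masses --- each $r(y)$ must equal $1$. The paper works in the opposite order: it first invokes Lemma~\ref{lem:yaron} to normalize $M$ so that its rows already sum to $1$, and then uses completeness to establish \emph{nonnegativity} of that particular $M$, by showing $\ker\pi_Y=\{0\}$ and constructing an explicit quadratic-scoring mechanism $\psi$ whose payoff-equivalent nonnegative counterpart $\phi$ must satisfy $\phi(\delta_i,\cdot)=M(\cdot,z_i)$. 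Your argument is shorter and more modular, since it treats Proposition~\ref{prop:power-nonnegative} as a black box and needs only one additional observation; it also sidesteps the paper's reliance on parameter distributions $\delta_i$ that induce point masses on $Z$ (a completeness-type property of $(Z,\pi_Z)$ rather than of $(Y,\pi_Y)$). What the paper's kernel argument buys is an identification of $M$ with the payoffs of an actual mechanism, which is conceptually informative but not needed for the equivalence itself.
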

The proof of Corollary~\ref{coro:power-blackwell-1} is in Appendix~\ref{appx:Proofs of Comparison of experiments}.

We now add the further restriction that the range of payments in the payoff-equivalent mechanism for the dominating experiment be contained in the range of payments of the original mechanism. This restriction is relevant when mechanism payoffs are not final payments but rather performance scores taking values on an exogenous scale, or probabilities of getting a fixed reward. In the latter case, realized payoffs are associated to expected payments, or expected utilities. A well-known property of these mechanisms, often used in experimental economics, is to solicit the truth from general expected utility maximizers, independently of the underlying utility function.\footnote{This point is made in \citet{savage1971elicitation} and exploited notably by \citet{roth1979game}, \citet{grether1981financial} and \citet{karni2009mechanism}.}

Notice first that if $(Y,\pi_Y)$ is more informative than $(Z,\pi_Z)$, so that $\pi_Z = \pi_Y M$ for some Markov matrix $M$, then the payments $\phi(r,y)$ defined in Equation~\eqref{eq:construction-M} are convex combinations of $\psi(r,z)$, $z \in Z$, and hence, the range of the possible payments is no larger under $\phi$ than under $\psi$. Therefore, being more informative is a sufficient condition for being able to generate the same incentives without increasing the range of payments; surprisingly, it is not a necessary condition.

To see why, let us go back to the last example with 
\[
\pi_Y = \Matrix{.5,0,0,.5;0,.5,0,.5;0,0,.5,.5}
\qquad \text{and} \qquad
\pi_Z = \Matrix{.5,.5,0;.5,0,.5;0,.5,.5}.
\]
Recall that $\pi_Y$ is not more informative than $\pi_Z$. Consider any mechanism $\psi(r,z)$ with payoffs in $[0,1]$. We now define a payoff-equivalent mechanism $\phi(r,y)$ also with payoffs in $[0,1]$. Fix $r$ and let us write $\psi_z$ instead of $\psi(r,z)$ for any $z\in Z$, and $\phi_y$ instead of $\phi(r,y)$ for any $y\in Y$. By symmetry, we may assume without loss that $\psi_1\ge \psi_2 \ge \psi_3$. Define $\phi_4=0$ if $\psi_1+\psi_2\le 1$, and $\phi_4=\psi_1+\psi_2-1$ otherwise. Next, define $\phi_1=\psi_1+\psi_2-\phi_4$, $\phi_2=\psi_1+\psi_3-\phi_4$, and $\phi_3=\psi_2+\psi_3-\phi_4$.  

We leave it to the interested reader to verify that $0\le \phi_y \le 1$ for every $y\in Y$. To see that $\phi(r,\cdot)$ gives the same expected payoff as $\psi(r,\cdot)$ for any belief $p$, suppose first that $p$ assigns probability 1 to state $\theta_1$. The expected payoff under $\psi$ is then $.5(\psi_1+\psi_2)$, while under $\phi$ it is $.5(\phi_1+\phi_4) = .5(\psi_1+\psi_2)$, i.e., they are equal. Similar calculations apply for states $\theta_2$ and $\theta_3$, implying that expected payoffs are identical under any belief $p$.

\medskip

What property then characterizes the case where the range of payoffs in the dominating experiment need not increase to generate the same incentives? It turns out that the answer requires us to think about transitions from signals in the dominating experiment $\pi_Y$ to \emph{sets of signals} in the dominated experiment $\pi_Z$, instead of transitions from signals to signals as in Blackwell's order. We let $N$ stand for a $Y$-by-$2^Z$ matrix with rows corresponding to signals in $Y$ and columns corresponding to events (subsets) in $Z$. We then have the following. 

\begin{proposition}
\label{prop:power-blackwell-2}
Let $(Y,\pi_Y)$ and $(Z,\pi_Z)$ be two experiments. Then the following two statements are equivalent:\\
(i) Given any mechanism for $(Z,\pi_Z)$ with payoffs in $[0,1]$ there exists a payoff-equivalent mechanism for $(Y,\pi_Y)$ also with payoffs in $[0,1]$.\\
(ii) There exists a $Y$-by-$2^Z$ matrix $N$ with entries in $[0,1]$ such that, for every $A\subseteq 2^Z$ and every $\theta$,
\begin{equation}
\label{eq:signals-to-events}
\sum_{z\in A} \pi_Z(z|\theta) = \sum_y \pi_Y(y|\theta)N(y,A).
\end{equation}
\end{proposition}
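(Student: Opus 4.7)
The plan is to prove each direction separately. The key idea is that payoffs in $[0,1]$ admit a layer-cake representation as averages of indicators of upper level sets, and transitions $N(y,A)$ are exactly the right objects to translate such indicators across experiments.

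For the direction (ii)$\Rightarrow$(i), suppose $N$ is given as described and let $\psi$ be a mechanism for $(Z,\pi_Z)$ with payoffs in $[0,1]$. For each report $r$, write the layer-cake identity $\psi(r,z)=\int_0^1 \1\{\psi(r,z)\ge t\}\, \de t$, and let $A_t(r)=\{z\in Z:\psi(r,z)\ge t\}$. Define the candidate mechanism for $(Y,\pi_Y)$ by
\[
\phi(r,y) \;=\; \int_0^1 N(y,A_t(r))\, \de t.
\]
Then $\phi(r,y)\in[0,1]$ because $N$ has entries in $[0,1]$ and we integrate against the uniform measure on $[0,1]$. Payoff-equivalence follows from Fubini's theorem together with the identity in (ii): the expected payoff of $\phi$ under any belief $p$ becomes $\int_0^1\int_\Theta \sum_{z\in A_t(r)} \pi_Z(z|\theta)\, p(\de\theta)\,\de t$, which by reversing the order of integration equals $\E_p[\psi(r,z)]$.

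For the direction (i)$\Rightarrow$(ii), for each $A\subseteq Z$ consider the trivial-report mechanism $\psi_A$ for $(Z,\pi_Z)$ defined by $\psi_A(r_0,z)=\1_A(z)$, which has payoffs in $\{0,1\}\subseteq[0,1]$. By (i), there exists a payoff-equivalent mechanism $\phi_A(r_0,y)\in[0,1]$ for $(Y,\pi_Y)$. Set $N(y,A)\eqdef \phi_A(r_0,y)$, so that $N$ is a $Y$-by-$2^Z$ matrix with entries in $[0,1]$. Payoff-equivalence applied to the Dirac belief $p=\delta_\theta$ (which is a valid element of $\Delta(\Theta)$ in any standard Borel space) yields
\[
\sum_{z\in A} \pi_Z(z|\theta) \;=\; \E_{\delta_\theta}[\psi_A(r_0,z)] \;=\; \E_{\delta_\theta}[\phi_A(r_0,y)] \;=\; \sum_y \pi_Y(y|\theta)\, N(y,A),
\]
which is exactly equation~\eqref{eq:signals-to-events}.

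I do not expect a serious obstacle; the main conceptual step is recognizing that $[0,1]$-valued payoffs are convex combinations of indicators of events, so transitions $N$ from outcomes to \emph{events} (rather than to outcomes as in Blackwell's order, or to arbitrary reweightings as in the elicitation order) are the natural dual object. A small care point is the enumeration of $2^Z$ (which is harmless since $Z$ is finite), and checking measurability/Fubini in the layer-cake step, both of which follow from the finiteness of $Y$ and $Z$ and standard arguments.
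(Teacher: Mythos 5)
Your proof is correct and follows essentially the same route as the paper: the (i)$\Rightarrow$(ii) direction via indicator mechanisms $\1_A$ evaluated at Dirac beliefs is identical, and your layer-cake formula $\psi(r,z)=\int_0^1 \1\{\psi(r,z)\ge t\}\,\de t$ is just an explicit instance of the paper's decomposition of $\psi(r,\cdot)\in[0,1]^Z$ as a convex combination of extreme points of the hypercube (indicators of events), with the weights $\eta_r(A)$ given by the gaps between consecutive ordered payoff values. Since $Z$ is finite the integral is a finite sum, so the Fubini/measurability care point you flag is indeed harmless.
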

The proof of Proposition~\ref{prop:power-blackwell-2} is in Appendix~\ref{appx:Proofs of Comparison of experiments}. Note that if $\pi_Y$ is more informative than $\pi_Z$, $\pi_Z = \pi_Y M$ for some Markov matrix $M$, then by defining $N(y,A)=\sum_{z\in A} m(z|y)$ we obtain the equality in (\ref{eq:signals-to-events}).

\newpage

\appendix

\section{Proofs of Section~\ref{sec:elicitable information}}
\label{appx:Proofs of Elicitable information}

\subsection{Proof of Theorem~\ref{thm:baseline}}

We first remark that there exists a sequence of measurable sets of outcomes, $E_1, E_2, \dots,$ such that outcome distributions are identified on the family; that is, for any probability measures $\mu, \nu$ on $Y$, $\mu = \nu$ if and only if $\mu(E_i) = \nu(E_i)$ for every $i \in \N_+$. Indeed, since $Y$ is a Polish space, it has a countable base for its topology. The Borel $\sigma$-algebra on $Y$ is thus generated by a countable collection of events $\{ A_1, A_2, \dots\}$. For each $n$, the algebra $G_n$ generated by the truncated collection $\{A_1, \dots, A_n\}$ is finite, so the algebra $G$ generated by the full collection $\{ A_1, A_2, \dots\}$, which is equal to the union of all $G_n$, is countable. By the Carath\'{e}odory extension theorem, any probability measure on $Y$ is uniquely defined by the probability assigned to every event on $G$.

Let $y$ be the random outcome generated by $(Y,\pi)$.
A belief $p$ on parameters induces a belief $\lambda_p$ on outcomes as follows:
\begin{equation*}
	\lambda_p(A) = \int_\Omega \pi(A | \omega) p(\de \omega) = \E_p[\pi(A|\omega)].
\end{equation*}

Aside from the reported belief, the only other input to any mechanism for $(Y,\pi)$ is the experiment outcome, so any information that the mechanism elicits must be coarser than the information associated with the full information on the outcome distribution induced by the belief, $\scP^\star$. 

We describe below a simple mechanism that elicits $\scP^\star$.

Fix an arbitrary full-support distribution over the positive integers and consider the following protocol. The analyst first reports a parameter distribution. Then, a positive integer $i$ is drawn at random from the full-support distribution, and the analyst is paid
\begin{equation}
	1 - \pp{\lambda_p(E_i) - \1_{E_i}(x)}^2,\label{eq:thmbaseline-quadratic}
\end{equation}
where $\1_{E}$ is the indicator function of the set $E$.

Suppose the analyst's true belief is captured by the parameter distribution $p$.
A parameter distribution $q$ is a best response if and only if the analyst maximizes the expected value of Equation~\eqref{eq:thmbaseline-quadratic} for every positive integer $i$---since every $i$ is drawn with positive probability and Equation~\eqref{eq:thmbaseline-quadratic} is the classical Brier score \citep{brier1950verification}. 
This is equivalent to having $\lambda_p(E_i) = \lambda_q(E_i)$ for every $i$, which, in turn, is equivalent to $\lambda_p = \lambda_q$ by the remark above.
Hence, any strict best response includes full information on the outcome distribution induced by the true belief.
Since the analyst maximizes expected payoffs, the randomized protocol just mentioned is equivalent to the deterministic mechanism $\phi$ defined as 
\begin{equation*}
\phi(p,x) = 1 - \sum_{i} \Pr[i] \cdot \pp{\lambda_p(E_i) - \1_{E_i}(x)}^2.
\end{equation*}
We conclude there exists a mechanism $\phi$ for experiment $(Y,\pi)$ that elicits $\scP^\star$.


\subsection{Proof of Corollary~\ref{cor:estimator}}

Consider the product space $\Theta \times Y$ with its product $\sigma$-algebra. 
Let $\mu_p$ be the probability measure over that space induced by a belief $p$ and the experiment $(Y,\pi)$: For measurable sets $A \subseteq \Theta$ and $B \subseteq Y$,
\begin{equation*}
	\mu_p(A \times B) = \int_A \pi(B|\theta) p(\de \theta).
\end{equation*}
We continue to denote by $\lambda_p$ the outcome distribution induced by belief $p$.
Let $g$ and $w$ be as in the statement of Corollary~\ref{cor:estimator}, and $(\theta,y)$ be the random pair (parameter,outcome) generated by $\mu_p$. By definition, $g(\theta)$ is (a version of) the conditional expectation of $w(y)$ given $\theta$, so by the law of iterated expectations,
\[
\E_p[g(\theta)] 
=
\int_\Theta g(\theta) p(\theta)
=
\int_{\Theta \times Y} w(y) \mu_p(\de (\theta, y))  
= 
\int_Y w(y) \lambda_p(\de y).
\]
For any two beliefs $p$ and $q$ that are indistinguishable under $\scP^\star$, $\lambda_p = \lambda_q$, so $\E_p[g(\theta)] = \E_q[g(\theta)]$ which implies that the mean of $g(\theta)$ describes a coarser information than $\scP^\star$. By Theorem~\ref{thm:baseline}, we conclude that the mean of $g(\theta)$ is elicitable with $(Y,\pi)$.


\subsection{Proof of Corollary~\ref{cor:estimator-converse}}

Let $g \in \scG$.
As $(Y,\pi)$ is categorical we can write $Y=\{y_1, \dots, y_n\}$. 
By Theorem~\ref{thm:baseline}, the maximal information $\scP^\star$ that $(X,\pi)$ elicits is the partition induced by the means of $g_1(\theta), \dots, g_n(\theta)$ defined by $g_i(\theta) = \pi(y_i | \theta)$.

Let $\scM$ be the vector space of finite signed measures on $\Omega$, and let $\Gamma_i$ be the linear functional on $\scM$ defined by
\[
	\Gamma_i(\mu) = \int_\Theta g_i(\theta) \mu(\de\theta),
\]
and $\Gamma$ be the linear functional defined by
\begin{equation*}
	\Gamma(\mu) = \int_\Theta g(\theta) \mu(\de\theta).
\end{equation*}

Let us show the implication
\begin{equation}
	\label{eq:implication_psimu}
	\Gamma_1(\mu)=0, \dots, \Gamma_n(\mu)=0 \quad\implies\quad \Gamma(\mu) = 0
\end{equation}
for all $\mu \in \scM$.

The implication is immediate if $\mu = 0$, so suppose $\mu \ne 0$.
By the Jordan decomposition theorem, $\mu = \mu^+ - \mu^-$ where $\mu^+,\mu^-$ are positive measures.
If $\int_\Theta g_i(\theta) \mu(\de\theta) = 0$ for all $i$ then $\mu^+(\Theta)=\mu^-(\Theta)$ because $\sum_i g_i = 1$, and so $\mu = \alpha p - \alpha q$ for $p,q \in \Delta(\Theta)$ and $\alpha = \mu^+(\Theta) = \mu^-(\Theta) > 0$.
Hence, $\Gamma_i(\mu)=0$ for all $i$ implies $\E_p[g_i(\theta)] =  \E_q[g_i(\theta)]$ for all $i$.
Since $\scP^\star$ is the most refined information partition elicitable with $(Y,\pi)$, if $\E_p[g_i(\theta)] =  \E_q[g_i(\theta)]$ for all $i$, then $\E_p[g(\theta)] = E_q[g(\theta)]$, and so $\Gamma(\mu)=0$. Therefore, Equation~\ref{eq:implication_psimu} holds for all $\mu \in \scM$.

By Equation~\ref{eq:implication_psimu} and the Fundamental Theorem of Duality (see, for example, Theorem 5.91 of \citealp{aliprantis2006infinite}), $\Gamma$ is in the linear span of $\Gamma_1, \dots, \Gamma_n$. Therefore, $\Gamma = \sum_i w(y_i) \Gamma_i$ for some function $w:Y \to \R$. Considering an arbitrary parameter value $\theta$ and letting $\mu_\theta$ be the Dirac measure centered on the point $\theta$, it follows that
\[
g(\theta) = \Gamma(\mu_\theta) = \sum_i w(y_i) \Gamma_i(\mu_\theta) = \sum_i w(y_i) g_i(\theta).
\]


\subsection{Proof of Theorem~\ref{thm:multiple}}

Recall that $\Theta$ is a Polish topological space endowed with the Borel $\sigma$-algebra, and $\scG$ is the set of bounded measurable real-valued functions on $\Theta$. We continue to denote by $\scM$ the set of finite signed measures on $\Theta$.

For $\mu \in \scM$ and $g \in \scG$, consider the bilinear functional $\langle g, \mu \rangle = \int_\Theta g(\theta) \mu(\de \theta)$. Note that with this bilinear functional, $\langle \scG, \scM \rangle$ forms a dual pair (in the sense of Definition 5.90 of \citealp{aliprantis2006infinite}). We endow $\scG$ with the weak topology associated with this dual pair, whereby the set of continuous linear functionals on $\scG$ coincides with $\scM$, in the sense that $\Gamma$ is a continuous linear functional on $\scG$ if and only if there exists $\mu \in \scM$ such that $\Gamma(g) = \langle g, \mu \rangle$ for all $g \in \scG$ \citep[Theorem 5.93]{aliprantis2006infinite}.

Theorem~\ref{thm:multiple} then follows from Lemmas~\ref{lem:multiple-1} and~\ref{lem:multiple-2} below.

In the first lemma, we consider an arbitrary experiment $(Y,\pi)$. We write $\scL$ for the linear span of the functions $\theta \mapsto \pi(A | \theta)$ with $A \subseteq Y$ a measurable set of outcomes, and write $\widebar \scL$ for the closure of $\scL$.
\begin{lemma}
	\label{lem:multiple-1}
	For any $g \in \scG$, the mean of $g(\theta)$ is elicitable with $(Y,\pi)$ if, and only if, $g \in \widebar \scL$.
\end{lemma}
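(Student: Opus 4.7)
The plan is to characterize elicitability via the dual pair $\langle \scG, \scM \rangle$ already in place and invoke the Fundamental Theorem of Duality, much as in the proof of Corollary~\ref{cor:estimator-converse}. Recall first that the mean of $g(\theta)$ is elicitable with $(Y,\pi)$ precisely when $\langle g, p-q \rangle = 0$ for every pair $p,q \in \Delta(\Theta)$ satisfying $\langle \pi(A|\cdot), p-q \rangle = 0$ for all measurable $A \subseteq Y$; that is, whenever $p, q$ induce the same mean outcome distribution.

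For the ``if'' direction, suppose $g \in \widebar\scL$. For any $p, q$ with $\lambda_p = \lambda_q$, the signed measure $\mu = p - q \in \scM$ defines a weakly continuous linear functional on $\scG$. By linearity, $\langle h, \mu \rangle = 0$ for every $h \in \scL$, since $\langle \pi(A|\cdot), \mu \rangle = \lambda_p(A) - \lambda_q(A) = 0$ for all $A$. Weak continuity then forces $\langle g, \mu \rangle = 0$ as well, so $\E_p[g(\theta)] = \E_q[g(\theta)]$.

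For the ``only if'' direction, I would use the Fundamental Theorem of Duality together with the identification of continuous linear functionals on $\scG$ with elements of $\scM$: an element $g \in \scG$ lies in $\widebar\scL$ if and only if every $\mu \in \scM$ that annihilates $\scL$ also annihilates $g$. So it suffices to show that whenever $\mu \in \scM$ satisfies $\int_\Theta \pi(A|\theta)\,\mu(\de\theta) = 0$ for every measurable $A$, we have $\int_\Theta g(\theta)\,\mu(\de\theta) = 0$. To this end, apply the Jordan decomposition $\mu = \mu^+ - \mu^-$; taking $A = Y$ forces $\mu(\Theta) = 0$, so $\mu^+(\Theta) = \mu^-(\Theta) =: \alpha$. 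If $\alpha = 0$ then $\mu = 0$ and the conclusion is trivial; otherwise set $p = \mu^+/\alpha$ and $q = \mu^-/\alpha$, giving $\mu = \alpha(p - q)$ with $p, q \in \Delta(\Theta)$ and $\lambda_p = \lambda_q$. The elicitability hypothesis then delivers $\E_p[g] = \E_q[g]$, i.e., $\langle g, \mu\rangle = 0$, as required.

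The main step to get right is the duality invocation; once the hypothesis is translated via Jordan decomposition into a statement about pairs of probability measures, the argument is essentially the one already used in the proof of Corollary~\ref{cor:estimator-converse}, only now applied in the infinite-dimensional weak topology supplied by the dual pair $\langle \scG, \scM \rangle$.
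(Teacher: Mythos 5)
Your proof is correct and follows essentially the same route as the paper: the same dual pair $\langle \scG, \scM \rangle$ and weak topology, the same Jordan-decomposition step turning an annihilating signed measure into a pair of beliefs with equal mean outcome distributions, and the same continuity argument for the ``if'' direction. The only cosmetic difference is that where you cite the double-annihilator characterization of $\widebar\scL$ (which you label the Fundamental Theorem of Duality, though for the closure of an infinite-dimensional subspace this is really the bipolar theorem), the paper unpacks the same fact by invoking the separating hyperplane theorem for locally convex spaces directly.
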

\begin{proof}
We first prove that if $g \in \widebar \scL$ then the mean of $g(\theta)$ is elicitable with $(Y,\pi)$.

Suppose $p, q \in \Delta(\Theta)$ are such that $\E_p[g(\theta)] \ne \E_q[g(\theta)]$.
Then $\langle g, p - q \rangle \ne 0$. Because $f \mapsto \langle f, p - q \rangle$ is a continuous linear functional, there exists an open set $U$ that includes $g$ such that for all $f \in U$, $\langle f, p - q \rangle \ne 0$. Since $g$ is in the closure of $\scL$, there exists $f \in U$ such that $f \in \scL$. Hence, for this function $f$, $\E_p[f(\theta)] \ne \E_q[f(\theta)]$. By Theorem~\ref{thm:baseline}, the mean of $f(\theta)$ is elicitable with $(Y,\pi)$, thus $p$ and $q$ belong to different members of the maximal information partition $\scP^\star$ as defined in Section~\ref{sec:elicitable information}. Hence, the mean of $g(\theta)$ is elicitable.

We now prove the converse, that if the mean of $g(\theta)$ is elicitable, then $g \in \widebar \scL$.

Suppose, by contradiction, that $g \not\in \widebar \scL$.
The space $\scG$ being equipped with the weak topology, it is locally convex---a generalization of normed vector spaces---and by the separating hyperplane theorem for locally convex spaces \citep[Theorem 5.79]{aliprantis2006infinite}, there exists a continuous linear functional $\Gamma$ on $\scG$ such that $\Gamma(f)=0$ for all $f \in \widebar \scL$ while $\Gamma(g) \ne 0$.
Since the set of continuous linear functionals coincides with $\scM$, there is a finite signed measure $\mu$ such that for all $f \in \scG$,
\[
\Gamma(f) = \int_\Theta f(\theta) \mu(\de\theta).
\]

Constant functions trivially belong to $\scL$ so $\mu(\Theta)=0$. Then, by the Jordan decomposition theorem, we can write $\mu = \alpha p - \alpha q$, where $p,q \in \Delta(\Theta)$ and $\alpha > 0$.
That $\Gamma(g) \ne 0$ implies $\langle g, \alpha p - \alpha q \rangle \ne 0$, or equivalently, $\E_p[g(\theta)] \ne \E_q[g(\theta)]$. 
And since $\Gamma(f)=0$ when $f \in \widebar \scL$, we have $\E_p[f(\theta)] = \E_q[f(\theta)]$.

However, when $p$ and $q$ are beliefs such that for every measurable set $A \subseteq Y$, it is the case that $\E_p[\pi(A|\theta)] = \E_q[\pi(A|\theta)]$, then $p$ and $q$ belong to the same partition of the information partition $\scP^\star$ that, by Theorem~\ref{thm:baseline}, captures the maximal information elicited by $(Y,\pi)$, and so we must have $\E_p[g(\theta)] = \E_q[g(\theta)]$, hence a contradiction.
\end{proof}

In this second lemma, for any two linear subspaces of $\scG$, $V_1$ and $V_2$, we call pointwise product of $V_1$ and $V_2$ the set $\{g_1 g_2 : (g_1,g_2) \in V_1 \times V_2\}$ where $g_1 g_2$ simply refers to the pointwise product of $g_1$ and $g_2$.

\begin{lemma}
	\label{lem:multiple-2}
	Given two linear subspaces of $\scG$, the pointwise product of their closures is included in the closure of their pointwise products.
\end{lemma}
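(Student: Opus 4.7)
The plan is to show that any basic weak-open neighborhood of $g_1 g_2$ contains a function of the form $f h$ with $f \in V_1$ and $h \in V_2$, by performing two sequential approximations in a carefully chosen order.

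The key observation I would use is that tilting a finite signed measure by a bounded measurable function yields another finite signed measure: if $\mu \in \scM$ and $k \in \scG$, then $A \mapsto \int_A k \, \de \mu$ is an element of $\scM$, which I shall denote $k \mu$, and it satisfies $\langle g, k\mu \rangle = \langle g k, \mu \rangle$ for every $g \in \scG$. Crucially, this uses the boundedness of $k$ to ensure finiteness of the total variation.

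Given $g_1 \in \widebar{V_1}$, $g_2 \in \widebar{V_2}$, and a basic weak-open neighborhood of $g_1 g_2$ specified by measures $\mu_1, \dots, \mu_n \in \scM$ and a tolerance $\epsilon > 0$, I would proceed as follows. Since each $g_2 \mu_i$ lies in $\scM$, invoking $g_1 \in \widebar{V_1}$ yields some $f \in V_1$ with
\[
\left| \int (f - g_1) g_2 \, \de \mu_i \right| = |\langle f - g_1,\, g_2 \mu_i \rangle| < \epsilon/2 \quad \text{for } i = 1, \dots, n.
\]
I would then fix this $f$ and note that each $f \mu_i$ also lies in $\scM$, so invoking $g_2 \in \widebar{V_2}$ yields some $h \in V_2$ with
\[
\left| \int f(h - g_2) \, \de \mu_i \right| = |\langle h - g_2,\, f \mu_i \rangle| < \epsilon/2 \quad \text{for } i = 1, \dots, n.
\]
The algebraic identity $f h - g_1 g_2 = f(h - g_2) + (f - g_1) g_2$ then gives $|\langle f h - g_1 g_2, \mu_i \rangle| < \epsilon$ for each $i$, placing $f h$ in the chosen neighborhood. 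Since $f h$ lies in the pointwise product of $V_1$ and $V_2$ and the neighborhood was arbitrary, $g_1 g_2$ belongs to the weak closure of that product.

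The main subtlety is the ordering of the two approximations: the measures $f \mu_i$ against which $h$ must approximate $g_2$ depend on the choice of $f$, so $f$ must be fixed \emph{first}, and only then can $h$ be selected. The whole argument relies on the fact that all functions in $\scG$ are bounded, which is what keeps every tilted measure inside $\scM$, the space that defines the weak topology.
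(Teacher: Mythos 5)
Your proof is correct and is essentially the paper's own argument: the same two-step sequential approximation against measures tilted by a bounded function, with the same telescoping decomposition, differing only in that you fix the $V_1$ approximant first whereas the paper fixes the $V_2$ approximant first (a purely symmetric choice). Your remark on the necessity of the ordering and on boundedness keeping the tilted measures in $\scM$ matches the structure of the paper's proof exactly.
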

\begin{proof}
For $i=1,2$, let $V_i$ be a linear subspace of $\scG$ and $g_i$ be in the closure of $V_i$.

Let $\scB$ be the collection of sets of the form
\[
\pc{h \in \scG : \forall k,\ \abs{\int_\Theta \pp{h(\theta) - g_1(\theta) g_2(\theta)} \nu_k(\de \theta)} < \epsilon}
\]
for $\epsilon > 0$ and finitely many $\nu_1, \dots, \nu_K \in\scM$. Notice that the collection $\scB$ forms an open neighborhood base of $g_1 g_2$ by definition of the weak topology. Therefore, showing that $g_1 g_2$ is in the closure of the pointwise product of $V_1$ and $V_2$ reduces to showing that for every $B \in \scB$, there exists $(f_1,f_2) \in V_1 \times V_2$ such that $f_1 f_2 \in B$.

Fix a set $B \in \scB$. First, using the same notation as above and considering the signed measures $\eta_k$ defined as 
\[
	\eta_k(A) = \int_A g_1(\theta) \nu_k(\de\theta),
\]
we can choose $f_2 \in V_2$ such that for all $k=1, \dots, K$,
\[
	\abs{\langle (f_2 - g_2) g_1, \nu_k \rangle} 
	=
	\abs{\langle f_2 - g_2, \eta_k \rangle} 
	< 
	\frac{\epsilon}2,
\]
since $g_2$ is in the closure of $V_2$.
Second, considering the signed measures $\eta_k$ now defined instead as
\[
	\eta_k(A) = \int_A f_2(\theta) \nu_k(\de\theta),
\]
we can choose $f_1 \in V_1$ such that for all $k=1, \dots, K$,

\[
	\abs{\langle (f_1 - g_1) f_2, \nu_k \rangle} 
	=
	\abs{\langle f_1 - g_1, \eta_k \rangle} 
	< 
	\frac{\epsilon}2,
\]
since $g_1$ is in the closure of $V_1$.

Hence, for all $k=1,\dots,K$,
\[
	\abs{\langle f_1 f_2 - g_1 g_2, \nu_k \rangle}
	\le
	\abs{\langle (f_1 - g_1) f_2, \nu_k \rangle}
	+
	\abs{\langle (f_2 - g_2) g_2, \nu_k \rangle}
	<
	\epsilon,
\]
and thus $f_1 f_2 \in B$.
\end{proof}

We now conclude the proof of Theorem~\ref{thm:multiple}. We focus on the case $n=2$, the case $n > 2$ is an immediate generalization.
For $i=1,2$, let $\scL_i$ be the linear span of the functions $\theta \mapsto \pi_i(A|\theta)$ for $A$ a measurable subset of $Y_i$.
Under the assumption of the theorem, the mean of $g_i(\theta)$ is elicitable with $(Y_i,\pi_i)$, implying by Lemma~\ref{lem:multiple-1} that $g_i$ is in the closure of $\scL_i$.
Then, by Lemma~\ref{lem:multiple-2}, $g_1 g_2$ is in the closure of the pointwise product of $\scL_1$ of $\scL_2$, and so by Lemma~\ref{lem:multiple-1} again, the mean of $g_1(\theta) g_2(\theta)$ is elicitable with $(Y_1,\pi_1) \otimes (Y_2,\pi_2)$.


\subsection{Proof of Theorem~\ref{thm:covariates}}

For all $x \in X$, let $\psi(\cdot | x)$ be a mechanism for $(Y,\pi_x)$ that elicits $\scP(x)$ and whose payoffs take values in a bounded interval such as $[0,1]$ (the proof of Theorem~\ref{thm:baseline} includes an instance of such a mechanism). 
To elicit information with outcomes from the mixture experiment, we consider the compound mechanism $\phi$ defined by $\phi(p,(x,y)) = \psi(p,y|x)$. We show that this mechanism elicits $\scP_0$---that is, fixing any two beliefs on parameters, $p$ and $q$, that are distinguishable under $\scP_0$, we show
\[
\E_p[\phi(p,(x,y))] > \E_p[\phi(q,(x,y))].
\]

Let $X^\infty$ be the space of all infinite sequences in $X$. A generic element of $X^\infty$ is denoted $x^\infty = (x_1, x_2, \dots)$. (The case of finite sequences is identical and omitted.)
We abuse notation and use again the symbol $\mu$ for the probability measure over sequences whose elements are drawn independently and identically according to (the original) $\mu$.

Under the assumptions of Theorem~\ref{thm:covariates}, there exists $\scS \subseteq X^\infty$ with $\mu(\scS) > 0$ and such that, for every $(x_1, x_2, \dots) \in \scS$, the join of the partitions in $\{ \scP(x_i) \}$ is a refinement of $\scP_0$.

Observe that
\begin{multline*}
\E_p[\phi(q,(x,y))] 
=
\int_\Theta \int_{X \times Y} \phi(q,(x,y))\cdot \pi(\de (x,y) | \theta) \cdot p(\de \theta)
\\ 
=
\int_{X^\infty} 
\pp{
\sum_{i=1}^\infty
\frac{1}{2^i}
\int_\Theta \int_Y
\psi(q,y|x_i) \cdot
\pi_{x_i}(\de y|\theta) \cdot
p(\de\theta)
}
\mu(\de x^\infty) 	
\end{multline*}
and similarly if $p$ is used in place of $q$. 
For every $x$, by incentive compatibility of $\psi(\cdot | x)$,
\[
\int_\Theta \int_Y \pp{\psi(p,y | x) - \psi(q,y | x)} \cdot \pi_{x}(\de y|\theta) \cdot p(\de\theta) \ge 0,
\]
and hence,
\begin{multline*}
\E_p[\phi(p,(x,y))] - \E_p[\phi(q,(x,y))]
\\
\ge 
\int_{\scS} 
\pp{
\sum_{i=1}^\infty
\frac{1}{2^i}
\int_\Theta \int_Y
\pp{\psi(p,y|x_i) - \psi(q,y|x_i)} \cdot
\pi_{x_i}(\de y|\theta) \cdot
p(\de\theta)
}
\mu(\de x^\infty). 	
\end{multline*}

Since, for every $(x_1, x_2, \dots) \in \scS$, the join of the partitions in $\{ \scP(x_i) \}$ is a refinement of $\scP_0$, there exists $j$ such that
$\E_p[\psi(p,y|x_j)] 
>
\E_p[\psi(q,y|x_j)]  
$ (where the random element in the expectation is $y$ and not $x_i$). Hence,
\[
\sum_{i=1}^\infty
\frac{1}{2^i}
\int_\Theta \int_Y
\pp{\psi(p,y|x_i) - \psi(q,y|x_i)} \cdot
\pi_{x_i}(\de y|\theta) \cdot
p(\de\theta)
>0,
\]
and since $\mu(\scS)>0$,
\[
\int_{\scS} 
\pp{
\sum_{i=1}^\infty
\frac{1}{2^i}
\int_\Theta \int_Y
\pc{\psi(p,y|x_i) - \psi(q,y|x_i)}
\pi_{x_i}(\de y|\theta)
p(\de\theta)
}
\mu(\de x^\infty)
>
0,
\]
which implies $\E_p[\phi(p,(x,y))] - \E_p[\phi(q,(x,y))] > 0$.


\section{Proofs of Section~\ref{sec:comparison of experiments}}
\label{appx:Proofs of Comparison of experiments}

\subsection{Proof of Proposition~\ref{prop:comparison}}

Let us start with the `if' part of the proposition: Suppose $(Y,\pi_Y)$ is more informative than $(Z,\pi_Z')$ in the sense of Blackwell, with $(Z,\pi_Z')$ some uniform garbling of $(Z,\pi_Z)$.

As discussed in Section~\ref{sec:comparison of experiments}, $(Y,\pi_Y)$ also dominates $(Z,\pi_Z')$ in the sense of in the sense of elicitation, and by transitivity, it is enough to show that $(Z,\pi_Z')$ dominates $(Z,\pi_Z)$ in the sense of elicitation. For some $\epsilon \in [0,1)$, $\pi_Z' = ((1-\epsilon) I + \epsilon/|Z| J) \pi_Z$, where $I$ is the identity matrix and $J$ is the unit matrix---here both square matrices of dimension $|Z|$. It is immediate that the matrix $((1-\epsilon) I + \epsilon/|Z| J)$ is full rank. Thus, $(Z,\pi_Z')$ dominates $(Z,\pi_Z)$ in the sense of elicitation. 

The `only if' part of the proposition makes use of the following lemma.

\begin{lemma}
\label{lem:yaron}
$(Y,\pi_Y)$ dominates $(Z,\pi_Z)$ in the sense of elicitation if, and only if, there exists a $Y$-by-$Z$ matrix $M$ with $\sum_z M(y,z) =1$ such that $\pi_Z = \pi_Y M$.
\end{lemma}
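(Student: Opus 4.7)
The converse direction of the lemma is immediate: if $\pi_Z = \pi_Y M$ for some matrix $M$ with row sums $1$, then in particular such an $M$ exists, so the preceding lemma gives elicitation dominance. All the substantive work lies in the forward direction, where the goal is to upgrade an arbitrary matrix $M$ satisfying $\pi_Z = \pi_Y M$ (guaranteed to exist by the preceding lemma) into one with rows that sum to $1$, while preserving the identity $\pi_Y M = \pi_Z$.

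The plan is a simple rank-one correction. Starting from any $M$ with $\pi_Z = \pi_Y M$, I would define $s_y \eqdef \sum_{z} M(y,z)$ and observe that summing the identity $\pi_Z(z|\theta) = \sum_y \pi_Y(y|\theta) M(y,z)$ over $z \in Z$ and using that $\pi_Z(\cdot|\theta)$ is a probability distribution yields
\[
1 = \sum_z \pi_Z(z|\theta) = \sum_y \pi_Y(y|\theta)\, s_y \qquad \forall \theta \in \Theta.
\]
Since $\sum_y \pi_Y(y|\theta) = 1$ as well, the function $y \mapsto 1 - s_y$ has zero $\pi_Y$-expectation under every $\theta$, i.e., $\sum_y \pi_Y(y|\theta)(1-s_y) = 0$ for all $\theta$. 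This is exactly the condition that allows us to add a correction term supported on the ``row-sum defect'' without disturbing $\pi_Y M$.

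Concretely, I would fix any probability distribution $\nu$ on $Z$ (for instance the uniform distribution, which always exists since $Z$ is finite) and define
\[
M'(y,z) \eqdef M(y,z) + (1-s_y)\,\nu(z).
\]
By construction each row of $M'$ sums to $s_y + (1-s_y) = 1$, and the correction contributes $\nu(z) \sum_y \pi_Y(y|\theta)(1-s_y) = 0$ to the product $\pi_Y M'$, so $\pi_Y M' = \pi_Y M = \pi_Z$. Hence $M'$ is the desired matrix. I do not expect any real obstacle here: the preceding lemma does all the heavy lifting, and the only novelty is recognizing that the row-sum deficit is automatically orthogonal to every column of $\pi_Y$, which permits this one-line algebraic fix.
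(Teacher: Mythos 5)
Your proof is correct and follows essentially the same route as the paper's: starting from an arbitrary $M$ with $\pi_Z = \pi_Y M$ supplied by the preceding lemma, you add a rank-one correction supported on the row-sum defect, which is annihilated by $\pi_Y$ because $\sum_y \pi_Y(y|\theta)\, s_y = 1 = \sum_y \pi_Y(y|\theta)$ for every $\theta$. If anything, your version—spreading the defect $(1-s_y)$ across the columns according to a probability distribution $\nu$ on $Z$—is the correctly normalized form of the paper's correction $\widetilde M(y,z) = M(y,z) + 1 - M(y)$, whose rows as literally written sum to $M(y) + |Z|\,(1-M(y))$ rather than $1$, so your argument is, in fact, the cleaner one.
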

\begin{proof}
We have already discussed the `if' part in Section~\ref{sec:comparison of experiments}.
For the `only if' part, recall that if $(Y,\pi_Y)$ dominates $(Z,\pi_Z)$ in the sense of elicitation, then $\pi_Z = \pi_Y M$ for some $Y$-by-$Z$ matrix $M$.
Let $M(y) = \sum_z M(y,z)$ and $\widetilde M$ by the $Y$-by-$Z$ matrix with entries $\widetilde M(y,z) = M(y,z) + 1 - M(y)$. Note that $\sum_z \widetilde M(y,z) = M(y) + 1 - M(y) = 1$. 
In addition, for all $\theta$,
\begin{equation*}
\sum_y \widetilde M(y,z) \pi_Y(y|\theta) = \pi_Z(z|\theta) + 1 - \sum_y M(y) \pi_Y(y|\theta) = \pi_Z(z|\theta),
\end{equation*}
because
\begin{equation*}
\sum_y M(y) \pi_Y(y|\theta) = \sum_{y,z} M(y,z) \pi_Y(y|\theta) = \sum_z \pi_Z(z|\theta) = 1,
\end{equation*}
and hence, $\pi_Z = \pi_Y \widetilde M$.
\end{proof}

Suppose $(Y,\pi_Y)$ dominates $(Z,\pi_Z)$ in the sense of elicitation.
Let $(Z, \pi_U)$ be the degenerate experiment that gives an outcome from $Z$ uniformly at random, and let $T = (1-\epsilon) M + \epsilon/|Z| J$ for $0 < \epsilon < 1$, where $J$ is now the unit matrix of dimension $|Y|$-by-$|Z|$. 
Notice that $T$ has all positive entries if $\epsilon$ is close enough to 1, and notice that all rows of $T$ sum to 1, hence, we can choose $T$ to be Markov.
For such $\epsilon$, we consider the uniform garbling of $(Z,\pi_Z)$, $(Z,\pi_Z')$, for which the probability of the uniformly drawn outcome is $\epsilon$. Then, we have
\begin{align*}
\pi_Z' 
&= (1-\epsilon) \pi_Z   +  \epsilon \pi_U   \\
&= (1-\epsilon) \pi_Y M +  \epsilon \pi_Y J \\
&= \pi_Y \pp{ (1-\epsilon) M + \epsilon J}  \\
&= \pi_Y T,
\end{align*}
and hence, $(Y,\pi_Y)$ is more informative than $(Z,\pi_Z')$.


\subsection{Proof of Proposition~\ref{prop:power-valuefunction}}

Fix a mechanism $\phi:\Delta(\Theta) \times Y \to \R$ for $(Y,\pi_Y)$, and another mechanism $\psi:\Delta(\Theta) \times Z \to \R$ for $(Z,\pi_Z)$, both continuous and incentive compatible.

We let $V_\phi$ be the value function of $\phi$,
\[
V_\phi(p) = \sup_{q \in \Delta(\Theta)} \E_p[\phi(q,y)],
\]
and similarly for $\psi$. If the expected payoffs of the truthful analyst are the same under both mechanisms, then $V_\phi = V_\psi$.

Consider $V:[0,1] \to \R$ defined as
\[
V(\alpha) 
= V_\phi(\alpha p + (1-\alpha)q) 
= \sup_{\alpha'} \E_{\alpha p + (1-\alpha)q} \pb{ \phi(\alpha' p + (1-\alpha')q, y)}.
\]
By the envelope theorem \citet[Theorem 2]{milgrom2002envelope}, $V$ is differentiable and 
\[
V'(\alpha) = \E_p\pb{\phi(\alpha p + (1-\alpha)q, y)} - \E_q\pb{\phi(\alpha p + (1-\alpha)q, y)}.
\]
The continuity of $\phi$ also makes $V'$ continuous, so
\[
V'(0) = \E_p\pb{\phi(q, y)} - \E_q\pb{\phi(q, y)},
\]
and hence,
\[
\E_p\pb{\phi(q, y)} = V_\phi(q) + \frac{\partial}{\partial \alpha} V_\phi(\alpha p + (1-\alpha)q).
\]
Since $V_\phi = V_\psi$, it follows that $\E_p\pb{\phi(q, y)} = \E_p\pb{\psi(q, z)}$.

\subsection{Proof of Proposition~\ref{prop:power-nonnegative}}

$(i) \Longrightarrow (ii)$\\
Consider a nonnegative mechanism $\psi$ for $\pi_Z$ where the set of possible reports is $\scR=Z$, and such that $\psi(z,z')=1$ if $z=z'$ and $\psi(z,z')=0$ otherwise. By assumption, there is a nonnegative mechanism $\phi$ for $\pi_Y$ with the same set of reports $\scR=Z$ such that $\sum_y \phi(z,y) \pi_Y(y|\theta) = \sum_{z'} \psi(z,z) \pi_Z(z'|\theta) = \pi_Z(z|\theta)$ for every state $\theta$. Define $M(y,z)=\phi(z,y)\ge 0$ to get the required matrix $M$.

$(ii) \Longrightarrow (i)$\\
Given a nonnegative mechanism $\psi$ for $\pi_Z$, define a mechanism $\phi$ for $\pi_Y$ as in (\ref{eq:construction-M}). Since $M$ is nonnegative, so is $\phi$.


\subsection{Proof of Corollary~\ref{coro:power-blackwell-1}}

For $p \in \Delta(\Theta)$, let $\lambda_p$ be the distribution over $Z$ induced by the belief $p$: for all $z \in Z$,
\[
\lambda_p(z) = \int_\Theta \pi(y|\theta) p(\de \theta).
\]

Observe that by the condition in Corollary~\ref{coro:power-blackwell-1}, $(Y,\pi_Y)$ dominates $(Z,\pi_Z)$ in the sense of elicitation (for example, one may apply the condition on the mechanism constructed in the proof of Theorem~\ref{thm:baseline}). Therefore, as discussed in Section~\ref{sec:comparison of experiments}, there exists a $Y$-by-$Z$ matrix $M$ such that $\pi_Z = \pi_Y M$, and by Lemma~\ref{lem:yaron} from the proof of Proposition~\ref{prop:power-general}, we choose $M$ such that every row sums to 1.

The null space of $\pi_Y$ is defined as
\[
\ker \pi_Y = \pc{ v \in \R^Y : \forall \theta \in \Theta,\ \sum_y  v(y) \pi(y|\theta)= 0}.
\]
Because $(Y,\pi_Y)$  is complete,  $\ker \pi_Y = \{ 0 \}$. Indeed, since $\R^Y$ is the linear span of the $(|Y|-1)$-simplex (identified with $\Delta(Y)$), it suffices to show that for any $v$ in the simplex, 
\[
\forall \theta \in \Theta,\ \sum_y  v(y) \pi(y|\theta)= 0
\qquad \implies \qquad
v = 0.
\]
This implication holds because, if $p \in \Delta(\Theta)$ is such that $\lambda_p = v$---which is possible because $(Y,\pi_Y)$  is complete---then,
\[
\int_\Theta \sum_y  v(y) \pi(y|\theta) p(\de\theta) = \sum_y v(y) \lambda_p(y) = 0
\qquad \implies \qquad
v = 0.
\]

Let $Z=\{z_1, \dots, z_n\}$, and let $\delta_i \in \Delta(\Theta)$ be the distribution over parameters whose induced distribution over $Z$, $\lambda_{\delta_i}$, puts full mass on $z_i$. The completeness of $(Z\,\pi_Z)$ ensures that this distribution exists.

Let $\psi$ be the mechanism 
\[
\psi(p,z)=\sum_{i} \pp{1 - \pp{\lambda_p(z_i) - \1\{z=z_i\}}^2}.
\]
Note that $\psi$ is direct and incentive-compatible, with nonnegative values.
Notice that for all $i \ne j$, $\psi(\delta_i,z_i)=1$ and $\psi(\delta_i,z_j)=0$.

By the assumption of Corollary~\ref{coro:power-blackwell-1} , there exists a direct, incentive-compatible mechanism $\phi$ with nonnegative payoffs such that for all $p \in \Delta(\Theta)$, and all $i$,
\[
\E_p[\phi(\delta_i,y)] = \E_p[\psi(\delta_i,z)].
\]
In particular, taking for $p$ the parameter distribution that puts full mass on $\theta$, we get
\[
\sum_y \phi(\delta_i,y) \pi_Y(y|\theta) 
=
\sum_z \psi(\delta_i,z) \pi_Z(z|\theta)
= 
\pi_Z(z_i|\theta)
= 
\sum_y M(y,z_i) \pi_Y(y|\theta).
\]
So
\[
\phi(\delta_i,\cdot) - M(\cdot,z_i) \in \ker \pi_Y = \{0\},
\]
which means that all entries of $M$ are nonnegative.

Since $\sum_z M(y,z) = 1$ for all $y$, $M$ is a Markov matrix, and thus $(Y,\pi_Y)$ is more informative than $(Z,\pi_Z)$.

\subsection{Proof of Proposition~\ref{prop:power-blackwell-2}}

$(i) \Longrightarrow (ii)$\\
Consider a mechanism $\psi$ for $\pi_Z$ with set of reports $\scR=2^Z$ defined by $\psi(A,z)=1$ if $z\in A$ and $\psi(A,z)=0$ otherwise. Since the payoffs of $\psi$ are in $[0,1]$, there is by assumption a payoff-equivalent mechanism $\phi(A,y)$ for $\pi_Y$ with payoffs also in $[0,1]$. For every state $\theta$ and every $A\in 2^Z$ we have that
$$\sum_{z\in A} \pi_Z(z|\theta) = \sum_z \pi_Z(z|\theta) \psi(A,z) = \sum_y \pi_Y(y|\theta) \phi(A,y),$$
where the first equality is by construction of $\psi$, and the second by payoff-equivalence for the belief $p$ that assigns probability 1 to state $\theta$. Defining $N(y,A)=\phi(A,y)$ completes the proof.

$(ii) \Longrightarrow (i)$\\
Consider any mechanism $\psi$ for $\pi_Z$ with payoffs in $[0,1]$, and we will construct a payoff-equivalent mechanism $\phi$ for $\pi_Y$ also with payoffs in $[0,1]$. Let $r\in\scR$ be any report, and consider the vector $\psi(r,\cdot)\in [0,1]^Z$. This vector can be represented as a convex combination of the extreme point of the hypercube $[0,1]^Z$, which are the indicators $\1_A(\cdot)$ of all subsets $A\in 2^Z$. Thus, there are non-negative numbers $\{\eta_r(A)\}_{A\in 2^Z}$ summing up to 1 such that for all $z$, $\psi(r,z) = \sum_{A\in 2^Z} \eta_r(A) \1_A(z)$. Let $\phi$ be defined by $\phi(r,y) = \sum_{A\in 2^Z} \eta_r(A) N(y,A)$, and note that $\phi(r,y)\in [0,1]$ as a convex combination of numbers in $[0,1]$. Also, for every state $\theta$ we have
\begin{eqnarray*}
\sum_y \pi_Y(y|\theta)\phi(r,y) = \sum_y \pi_Y(y|\theta) \sum_{A\in 2^Z} \eta_r(A) N(y,A) = \sum_{A\in 2^Z} \eta_r(A) \sum_y \pi_Y(y|\theta) N(y,A) =\\ 
\sum_{A\in 2^Z} \eta_r(A) \sum_{z\in A} \pi_Z(z|\theta) = \sum_z \pi_Z(z|\theta) \sum_{\{A: z\in A\}} \eta_r(A) = \sum_z \pi_Z(z|\theta) \psi(r,z),
 \end{eqnarray*}
where the first equality is by construction of $\phi$, the second is just a change in the order of summation, the third is by the assumption of the proposition, the fourth is again a change in the order of summation, and the last is by $\psi(r,z) = \sum_{A\in 2^Z} \eta_r(A) \1_A(z)$. This shows that $\phi$ gives the same expected payoff as $\psi$ at any state $\theta$ implying that they are payoff-equivalent.


\newpage
\bibliographystyle{chicago}
\bibliography{_references.bib}

\end{document}